\documentclass{article}

\usepackage{microtype}
\usepackage{graphicx}
\usepackage{subcaption}
\usepackage{booktabs} 
\usepackage{enumitem}
\usepackage{hyperref}

\usepackage[preprint]{icml2026}

\usepackage{amsmath}
\usepackage{amssymb}
\usepackage{mathtools}
\usepackage{amsthm}
\usepackage{multirow}
\usepackage{makecell}
\usepackage{amsthm}
\usepackage{graphicx}
\usepackage{textcomp}
\usepackage{xcolor}
\usepackage{booktabs}
\usepackage{float} 

\usepackage[capitalize,noabbrev]{cleveref}

\theoremstyle{plain}
\newtheorem{theorem}{Theorem}[section]

\theoremstyle{definition}

\theoremstyle{remark}

\usepackage[textsize=tiny]{todonotes}

\begin{document}

\twocolumn[
  \icmltitle{IEMAS: An Incentive-Efficiency Routing Framework for Open Agentic Web Ecosystems}



  \icmlsetsymbol{equal}{*}

  \begin{icmlauthorlist}
    \icmlauthor{Hongze Liu}{equal,yyy}
    \icmlauthor{Chang Guo}{equal,yyy}
    \icmlauthor{Yingzeng Li}{yyy}
    \icmlauthor{Mengru Wang}{yyy}
    \icmlauthor{Jiong Lou}{yyy}
    \icmlauthor{Shijing Yuan}{yyy}
    \icmlauthor{Hefeng Zhou}{yyy}
    \icmlauthor{Chentao Wu}{yyy}
    \icmlauthor{Jie Li}{yyy}
  \end{icmlauthorlist}

  \icmlaffiliation{yyy}{School of Computer Science, Shanghai Jiao Tong University, China}

  \icmlcorrespondingauthor{Jie Li}{lijiecs@sjtu.edu.cn}

  \icmlkeywords{Machine Learning, ICML}

  \vskip 0.3in
]



\printAffiliationsAndNotice{}  

\begin{abstract}
  The transition to open, distributed Multi-Agent Systems (MAS) promises scalable intelligence but introduces a non-trivial tension: maximizing global efficiency requires cooperative, resource-aware scheduling, yet autonomous agents may be self-interested and cannot be managed by a centralized controller. Prior approaches fall short in two key areas: they typically focus on single-query  routing, neglecting long-term resource reuse (e.g., KV-caching) and the complexities of system-level many-to-many matching; furthermore, they rely on generic incentive mechanisms that ignore the distinct characteristics of LLM inference. To bridge this gap, we propose \textbf{IEMAS} (\textbf{I}ncentive-\textbf{E}fficiency Mechanism for \textbf{M}ulti-\textbf{A}gent \textbf{S}ystems), a  framework that aligns economic incentives with system performance. IEMAS integrates a probabilistic predictive model to estimate Quality of Service (QoS) under uncertainty, which feeds into a VCG-based bipartite matching mechanism. This design guarantees truthful capability reporting and social optimality while explicitly leveraging KV cache-affinity to minimize computational redundancy. We implement IEMAS on top of vLLM and evaluate it via extensive simulations. Results demonstrate that our incentive-efficiency co-design reducing average service cost by \textbf{35\%} and end-to-end latency by up to \textbf{2.9$\times$} compared to baselines.
\end{abstract}

\section{Introduction}
Large Language Models have enabled autonomous agents with strong reasoning and planning abilities, motivating recent work on multi-agent LLM systems where complementary agents collaborate to solve complex tasks \cite{guo2024large}. In practical deployments, such agents are often hosted by heterogeneous entities and distributed across networked infrastructures \cite{guo2025betaweb}. Emerging paradigms, such as BetaWeb \cite{guo2025betaweb}, Agentic Web\cite{yang2025agentic}, DAWN \cite{aminiranjbar2025dawn}, and the Internet of Agents (IoA) \cite{cheninternet, wang2025internet}, envision open, decentralized ecosystems where agents dynamically coordinate over the network to serve diverse client requests via flexible workflows.

However, coordinating LLM agents in open, distributed ecosystems creates three interdependent challenges~\cite{wang2025internet}. \textbf{P1. Scheduling and routing:} Matching concurrent client requests to heterogeneous, capacity-limited agents to maximize efficiency while preserving KV-prefix locality and meeting service constraints. \textbf{P2. Incentives:} Independent providers with private costs may misreport capabilities or selectively serve requests unless economically motivated. \textbf{P3. Communication scalability:} Broadcasting agent details or context histories between all participants incurs prohibitive latency and cost, rendering full transparency infeasible.

Most prior work on request routing (\textbf{P1}) relies on centralized controllers or trusted environments, assumptions that break down at web scale \cite{kwon2023vllm}. While some studies address planning for specific multi-agent workflows \cite{yang2025agentnet}, they typically assume a \emph{one-to-many} dispatch structure. In contrast, open networks exhibit a \emph{many-to-many} structure where multiple heterogeneous requests must be jointly matched to distributed agents. Consequently, routing strategies optimal for single-query dispatch become suboptimal in multi-agent scenarios. Crucially, while techniques like key--value (KV) caching yield order-of-magnitude speedups for related contexts, naïve load balancing destroys cache locality, significantly increasing latency and computational cost \cite{kwon2023vllm,microsoft2023mii}.

Beyond scheduling inefficiencies, the absence of verifiable mechanisms (\textbf{P2}) introduces strategic risks: agents may misreport capabilities or behave opportunistically. Without incentive alignment, even carefully engineered schedulers fail. While mechanism design in distributed systems has shown promise in steering strategic actors, existing proposals using game theory \cite{wang2024social}, auctions \cite{you2024privacy}, contract theory \cite{ye2024optimizing}, and persuasion \cite{zhong2025hybrid} generally treat agents as homogeneous resources. They fail to incorporate constraints intrinsic to LLM services, such as unpredictable generation lengths \cite{sun2024llumnix,chen2025kairos}, skill specialization \cite{yang2025agentnet}, and the critical value of KV-cache reuse \cite{pankvflow}. Furthermore, time-to-first-token (TTFT) effects directly shape perceived latency \cite{chen2025kairos}. Ignoring these factors compromises the incentive structures required to elicit truthful, high-quality participation.

To design an \emph{organic} LLM-MAS, mechanisms must explicitly incorporate these specific factors. Such integration enables allocation rules that preserve cache locality and balance short-term throughput with long-term cooperation. While auction theory provides a foundation for incentive design under asymmetric information \cite{qiu2022applications}, adapting it to LLM agent inference is non-trivial due to KV-cache dependencies and stochastic performance \cite{sun2024llumnix,chen2025kairos}. Moreover, in open environments, agent information (\textbf{P3}) is often incomplete, limiting the applicability of centralized optimization and motivating the need for distributed, incentive-aware mechanisms.

We address these gaps by proposing \textbf{IEMAS} (\textbf{I}ncentive-\textbf{E}fficiency Mechanism for \textbf{M}ulti-\textbf{A}gent \textbf{S}ystems), a distributed routing framework that matches client requests to suitable agents while jointly aligning economic incentives and exploiting computational resources such as KV caching. IEMAS introduces a lightweight proxy layer, where each serving node deploys a gateway for query management and cache coordination, reducing communication overhead and enabling enforceable pricing without revealing proprietary internals. By leveraging KV-prefix caching, multi-turn queries reuse prior computation, while a cache-aware predictive model captures quality-of-service (QoS) signals, including performance, latency, partial cache states, and costs. Based on these signals, clients and agents participate in proxy-level auctions, and a Min-Cost Max-Flow (MCMF) mechanism determines the final request-to-agent routing by jointly considering bidding outcomes, capability alignment, and dynamic workload conditions. We implement IEMAS on \textbf{vLLM}, demonstrating stable agent utilities and improved long-term social welfare.

To the best of our knowledge, this work is the first to investigate an incentive–efficiency co-design routing framework in agentic web ecosystems. The main contributions are summarized as follows:
\begin{itemize}[leftmargin=*,topsep=0em]
\setlength{\parskip}{0em}
\setlength{\itemsep}{0em}
\item We formalize the client--agent interaction model for distributed LLM agent services with agent specialization.
\item We develop a cache-aware predictive scheme that estimates capability, expected latency, and cost metrics.
\item We design an incentive-compatible auction mechanism that promotes truthful reporting and social welfare.
\item We implement the proposed framework on vLLM and perform extensive experiments.
\end{itemize}
The related code is available at: \url{https://github.com/PACHAKUTlQ/IMMAS}.
\section{Related Work}
\label{sec:related_work}

Our work bridges distributed multi-agent systems, efficient LLM serving, and algorithmic game theory. We review these domains and position IEMAS within the emerging landscape of the open Agentic Web.

\noindent
\textbf{LLM-MAS and the Agentic Web:} 
The paradigm of LLM-based Multi-Agent Systems (LLM-MAS) is shifting from closed, monolithic frameworks to open, decentralized networks. Early systems like AutoGEN \cite{wu2023autogen}, MetaGPT \cite{hong2024metagpt}, and CAMEL \cite{li2023camel} demonstrated collaborative problem-solving but assumed centralized control and cooperative agents. Recent research envisions an \textbf{Agentic Web} or \textbf{Internet of Agents (IoA)}, where heterogeneous agents dynamically discover and coordinate across the internet. Protocols such as IoA \cite{chen2024internet} and OpenAgents \cite{xie2023openagents} establish the connectivity layer for this vision, while infrastructures like BetaWeb \cite{guo2025betaweb} and DAWN \cite{aminiranjbar2025dawn} focus on trust and decentralization. However, while these frameworks solve connectivity, they largely overlook the critical economic and system-level inefficiencies of distributing heavy inference workloads. Unlike IEMAS, they lack mechanisms to optimize the high data-movement costs of context transfer or prevent free-riding in trustless environments.

\noindent
\textbf{LLM Routing and Cache-Aware Scheduling:}
Efficient routing is paramount for memory-bound Transformer inference. Optimizations like vLLM \cite{kwon2023vllm} and Orca \cite{yu2022orca} maximize throughput via non-contiguous memory management, while SGLang \cite{zheng2023sglang} and Prompt Cache \cite{gim2024prompt} utilize RadixAttention to achieve order-of-magnitude latency reductions via KV-prefix reuse. However, existing routing strategies (e.g., S-LoRA \cite{sheng2023slora}, Splitwise \cite{patel2023splitwise}) are designed for centralized clusters with full state visibility. In open environments, naïve load balancing scatters semantically related queries, destroying cache locality. IEMAS addresses this gap by introducing \textit{incentive-compatible cache-aware routing}, ensuring requests are routed based on the economic valuation of cached states rather than simple availability.

\noindent
\textbf{Mechanism Design for LLM Ecosystems:}
As agents become autonomous economic actors, mechanism design is required to ensure truthful reporting. While auction theory is well-established for cloud resource allocation, standard models do not account for the stochastic, "stateful" nature (context cache) of LLM inference. Recent works on AI marketplaces \cite{bansal2025magentic, yang2025agent} and truthful capability revelation \cite{liu2025cast} address strategic risks but often decouple the mechanism from system performance, treating compute as a generic commodity. IEMAS bridges this gap by embedding cache-affinity scores directly into a VCG-based auction \cite{vickrey1961counterspeculation, clarke1971multipart}. This ensures the market selects agents that are not only capable but also architecturally positioned (via active KV-caches) to execute tasks efficiently.
\section{System Formulation}
\label{sec:system}

\begin{figure}[tp]
\setlength{\belowcaptionskip}{-0.3cm} 
\setlength{\abovecaptionskip}{0cm}
    \centering
    \includegraphics[width=1\linewidth]{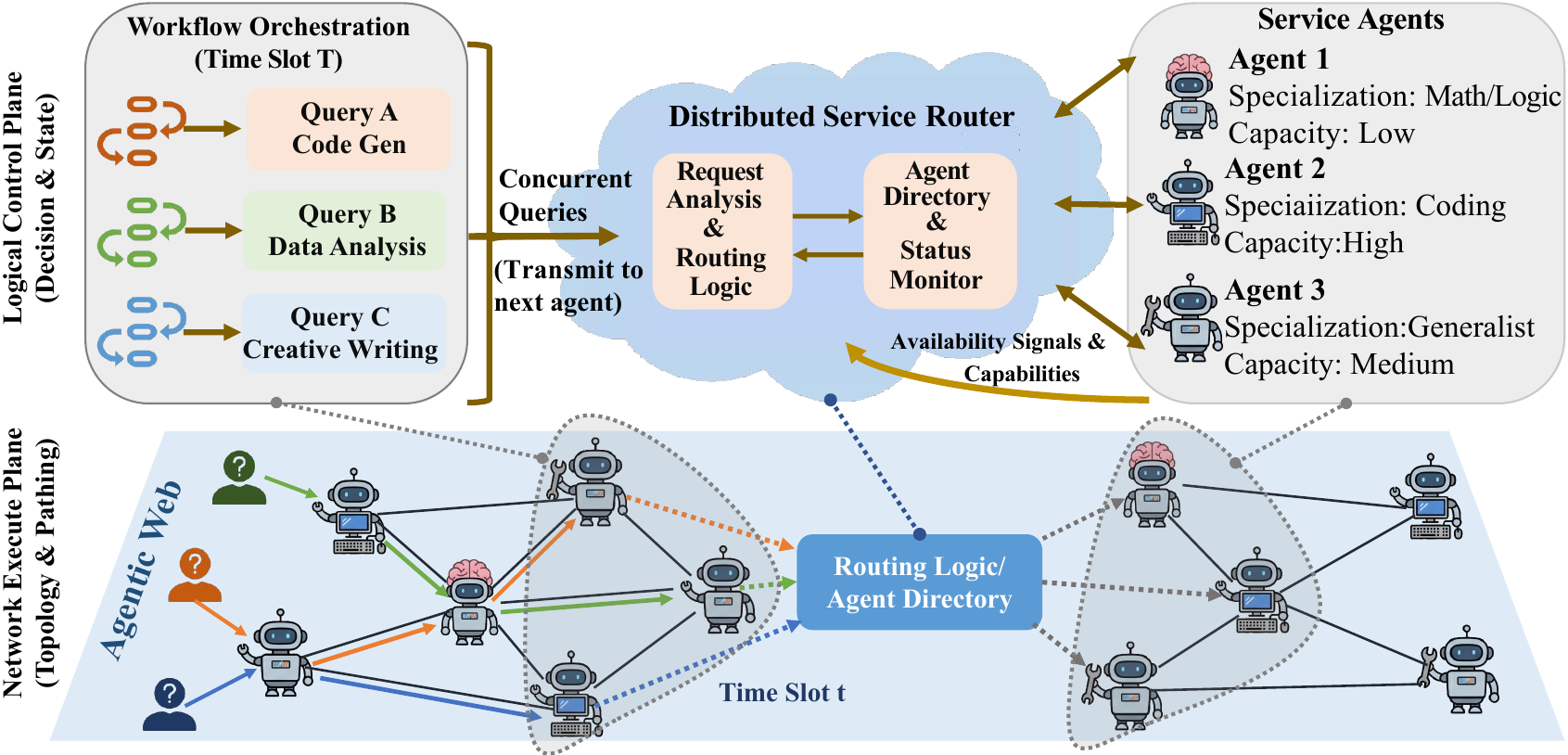}
   \caption{The Illustration of  Agentic Web Routing.}
    \label{fig:agentnetwork}
\end{figure}

We consider a distributed LLM serving system consisting of a set of clients and a set of autonomous LLM agents. Let $\mathcal{C}=\{1,\dots,N\}$ denote the set of clients (prior agents from workflows) and $\mathcal{S}=\{1,\dots,M\}$ denote the set of serving agents. At a time slot, every client $j \in \mathcal{C}$ concurrently submits a task characterized by a semantic context $T_j$.
 Each agent $i\in\mathcal{S}$ is described by a model profile $(S_i,K_i)$, where $S_i$ denotes the model scale (e.g., parameter size or compute footprint) and $K_i$ captures its domain specialization. Agent $i$ has a finite service capacity $B_i$, representing the maximum number of concurrent tasks it can process, and maintains a local key--value (KV) cache that stores prefixes from previously processed contexts.

To capture the benefit of context reuse, we model the semantic overlap between task $j$ and the cached state of agent $i$ by a score $o_{ij}\in[0,1]$. A higher $o_{ij}$ indicates greater KV reuse, which can significantly reduce inference latency and effective computation cost. Accordingly, the service cost incurred by agent $i$ when executing task $j$ is modeled as $C_i = C_i(T_j,S_i,o_{ij})$, which decreases as cache overlap $o_{ij}$ increases. Each client derives utility from successful task execution, which depends on both output quality and latency. Let $P_j(T_j,S_i,K_i)$ denote the expected performance or quality of serving task $j$ using agent $i$, and let $L_j(T_j,S_i,o_{ij})$ denote the corresponding latency or TTFT. We model the client’s valuation as a weighted combination of these factors:
\vspace{-5pt}
\begin{equation}
\setlength\belowdisplayskip{3pt}
\label{eq:individual_value}
v_j \;=\; \delta\, P_j(T_j,S_i,K_i) \;-\; (1-\delta)\, L_j(T_j,S_i,o_{ij}),
\end{equation}
where $\delta\in[0,1]$ captures the client’s relative preference between output quality and latency.

Task allocation is formulated as a many-to-many matching problem and implemented via a double-auction mechanism. Let $x_{ij}\in\{0,1\}$ indicate whether task $j$ is assigned to agent $i$. Feasible allocations must satisfy
\vspace{-5pt}
\begin{equation}
\setlength\belowdisplayskip{3pt}
\sum_{i\in\mathcal{S}} x_{ij} \le 1,\quad \forall j\in\mathcal{C}, 
\qquad
\sum_{j\in\mathcal{C}} x_{ij} \le B_i,\quad \forall i\in\mathcal{S}.
\end{equation}
Each client has a private valuation $v_j$, while each agent has a private service cost or ask price $c_i$. Given reported bids and asks, the auctioneer determines both the allocation $\{x_{ij}\}$ and the corresponding payments. Ignoring strategic behavior, an efficient allocation maximizes total social welfare:
\vspace{-5pt}
\begin{equation}
\setlength\belowdisplayskip{3pt}
W \;=\; \sum_{i\in\mathcal{S}} \sum_{j\in\mathcal{C}} \bigl(v_j - C_i\bigr)\, x_{ij},
\end{equation}
subject to the capacity and assignment constraints above. In each periodic auction round, clients and agents submit bids and asks, and the auctioneer applies the matching and pricing algorithm described in \S\ref{sec:mechanisms}.

A key challenge in this setting is that, unlike traditional distributed computing systems, service cost, latency, and performance for LLM-based agents are not deterministic functions of task and resource parameters~\cite{sun2024llumnix,chen2025kairos}. Instead, these quantities exhibit substantial variability due to factors such as prompt structure, output length, and cache state. Consequently, prior to running the auction, the platform maintains a predictive model
\vspace{-5pt}
\begin{equation}
\setlength\belowdisplayskip{3pt}
\mathcal{M}:\; (T_j, S_i, K_i, o_{ij}) \longmapsto (C_i, P_j, L_j),\nonumber
\end{equation}
which estimates the distributions or expectations of cost, performance, and latency. Recent work on LLM routing and scheduling suggests that although these metrics are stochastic, they often follow stable distributions conditioned on $(T_j,S_i,K_i,o_{ij})$~\cite{chen2025kairos}. The auction mechanism then operates on these predicted quantities and is designed to satisfy feasibility, as well as additional economic desiderata such as incentive compatibility and individual rationality.

\section{IEMAS}
\label{sec:IceMAS}

\begin{figure*}[t]
    \centering
    \includegraphics[width=1\linewidth]{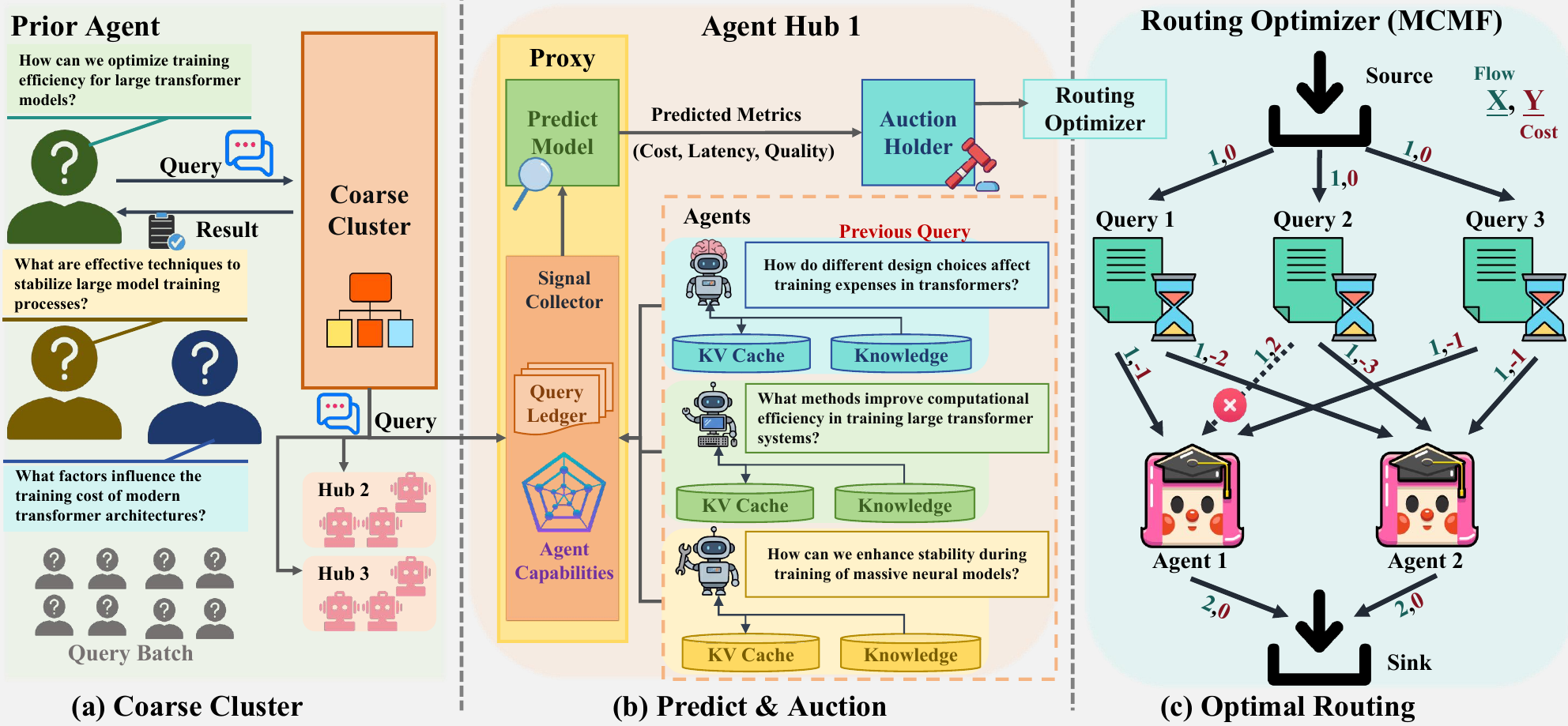}
    \caption{\textbf{IEMAS Overview.} (a) \textbf{Coarse-Grained Clustering:} Incoming web queries are first allocated to specific Agent Hubs via a fast, domain-based clustering mechanism. (b) \textbf{Predictive Auction:} A proxy layer utilizes predictive modeling to generate uncertainty-aware bids/asks and executes an auction to match tasks to agents under capacity constraints. (c) \textbf{Optimization:} The allocation is solved as a Min-Cost Max-Flow (MCMF) problem to maximize social welfare based on truthful bidding.}
    \label{fig:IceMAS}
\end{figure*}

\textbf{IEMAS} is an integrated architecture designed to align economic incentives with system-level efficiency in the open Agentic Web. Unlike prior work on request routing or agent selection that primarily considers scheduling a single task in isolation, large-scale LLM serving in open networks naturally induces a \emph{many-to-many} matching problem between concurrent client requests and heterogeneous agents. In such settings, myopic per-request routing decisions can be socially suboptimal, as they fail to account for capacity constraints, cross-request interactions, and cache locality effects, as empirically demonstrated in \S\ref{sec:experiments}. Following prior work, we introduce a \textbf{proxy hub} as a trusted mediator~\cite{wang2025internet,yang2025agent}, which aggregates information across requests and agents and computes a joint allocation to achieve socially efficient scheduling.

The framework is organized as follows. \S\ref{sec:kv-design} introduces the cache-aware predictive layer that preserves KV-cache locality and derives calibrated QoS signals. \S\ref{sec:mechanisms} presents a VCG-based mechanism that formulates task–agent assignment as a Min-Cost Max-Flow (MCMF) problem to maximize social welfare. \S\ref{sec:theoretical_analysis} analyzes key theoretical properties, including truthfulness, budget balance, and exactness. Finally, \S\ref{sec:hub-design} describes the Proxy Hub architecture, which improves scalability and enforces economic constraints by clustering agents and localizing matching.
Figure~\ref{fig:IceMAS} and Algorithm~\ref{alg:iemas_overview} provide an overview of the complete workflow.

\subsection{Resource-Aware Predictive Modeling}
\label{sec:kv-design}

IEMAS maintains a predictive model, which estimates these metrics(i.e., latency, cost, and quality) for every candidate pair $(i,j)$ to drive the auction described in \S\ref{sec:mechanisms}. The module operates in an online loop: \emph{(1) calculate cache-locality, (2) predict QoS, (3) update via bandit feedback.}

\textbf{Prefix-Locality (KV Reuse Proxy):} To estimate the cache affinity $o_{ij}$, the proxy maintains a \emph{prefix ledger} for each agent $i$. This ledger stores the text $\bar{p}_{i,d}$ of the last executed prompt for a specific dialogue session $d$. Given a new request $j$ belonging to session $d(j)$ with prompt $p_j$, the proxy computes the Longest Common Prefix (LCP) length $l_{ij} = \operatorname{lcp}(p_j, \bar{p}_{i,d(j)})$. The affinity score is defined as:
\vspace{-5pt}
\begin{equation}
\setlength\belowdisplayskip{5pt}
o_{ij} = \frac{l_{ij}}{\max(1, |p_j|)} \in [0, 1].
\label{eq:cache_affinity}
\end{equation}
Because the ledger is agent-specific, switching agents for a multi-turn conversation results in $o_{ij} \approx 0$, correctly capturing the loss of locality.

\paragraph{QoS Online Prediction:}
IEMAS maintains an independent predictor $g_i$ for each agent $i$. Each latency and cost predictor uses Hoeffding Tree Regression, and performance predictor uses Hoeffding Tree Classifier. At decision time, we capture system load features: global router inflight/rate ($I^{(r)}, R^{(r)}$) and agent specific inflight/rate ($I_i, R_i$). We compute normalized utilization $u_i = I_i / \max(1, B_i)$. For each pair $(i, j)$, we construct a feature vector:
\vspace{-5pt}
\begin{equation}
\setlength\belowdisplayskip{5pt}
\mathbf{x}_{ij} = \bigl(|p_j|,\ t_j,\ \omega_{ij},\ I^{(r)}, R^{(r)},\ I_i, R_i, B_i, u_i,\ \xi_j\bigr),
\label{eq:predictor_features}
\end{equation}
where $t_j$ is the turn index and $\xi_j$ denotes metadata (e.g., domain tag). The Hoeffding Tree predictor outputs the estimates $(\hat{L}_{ij}, \hat{C}_{ij}, \hat{Q}_{ij}) = g_i(\mathbf{x}_{ij})$.

To reduce cold-start bias, IEMAS optionally performs a brief startup warm-up by
issuing a small number of representative multi-turn dialogues to each agent to seed
the predictors and establish initial cache state; latency labels during warm-up can be
kept conservative to avoid one-time initialization artifacts.

\textbf{Feedback Accounting:}
After execution, the proxy records the observed latency $L^{\mathrm{obs}}_{ij}$ and computes the realized cost based on token usage. Let $\pi^{\mathrm{miss}}_i$ and $\pi^{\mathrm{hit}}_i$ be the prices for uncached (miss) and cached (hit) prompt tokens $n$, respectively. The observed cost is\cite{bergemann2025economics}:
\vspace{-5pt}
\begin{align}
\setlength\belowdisplayskip{5pt}
C^{\mathrm{obs}}_{ij}
&=\pi^{\mathrm{miss}}_i (n^{\mathrm{prompt}}_{j} - n^{\mathrm{hit}}_{ij})
+\pi^{\mathrm{hit}}_i n^{\mathrm{hit}}_{ij}
+\pi^{\mathrm{out}}_i n^{\mathrm{gen}}_{j}.
\label{eq:observed_cost}
\end{align}
The predictor is updated online by the $(L^{\mathrm{obs}}_{ij}, C^{\mathrm{obs}}_{ij}, P^{\mathrm{obs}}_{ij})$, enabling the system to learn the agent's true performance characteristics and cache behavior over time.

It is important to note that prior research has established robust predictive schemes for estimating LLM performance, latency and cost \cite{sun2024llumnix, chen2025kairos,feng2025graphrouter}, as well as advanced KV orchestration techniques for fine-gained utilization, such as CacheBlend \cite{yao2025cacheblend}. Our work distinguishes itself by focusing on the \textit{mechanism design} layer rather than low-level estimation or memory virtualization. IEMAS treats these system-level metrics as inputs to guide incentive-compatible auction valuations. Consequently, our framework is modular: state-of-the-art predictive models or novel KV management techniques can be seamlessly coupled into the IEMAS proxy to further enhance auction precision and resource efficiency.

\subsection{Multi-Agent Matching Mechanisms}
\label{sec:mechanisms}

Efficient allocation relies on clients truthfully revealing their utilities to avoid market failure. Since agent service costs are transparently quantified by the proxy via Eq~\ref{eq:observed_cost}, we assume agent costs are honest and restrict our incentive analysis to the strategic behavior of clients.

To elicit truthful participation, IEMAS adopts a welfare-maximizing allocation rule with VCG-style payments. Given a set of client requests $T_j$ and a set of agents \(S\), the proxy hub constructs candidate request--agent pairs \((i,j)\) and derives scalarized valuations \(\hat v_{i,j}\) and costs \(\hat c_{i,j}\) from the predictive models described in \S\ref{sec:kv-design}. The net welfare contribution of assigning request \(j\) to agent \(j\) is denoted by
\vspace{-5pt}
\begin{equation}
\setlength\belowdisplayskip{3pt}
    w_{i,j} = \hat v_{i,j} - \hat c_{i,j},\nonumber 
\end{equation}
and pairs with \(w_{i,j}<0\) are discarded to avoid inefficient matches. The proxy hub then solves the following welfare-maximization problem:
\vspace{-5pt}
\begin{align}
\setlength\belowdisplayskip{3pt}
\label{eq:optimal}
  \max_{x} \quad & W(C) = \sum_{j\in C}\sum_{i\in S} w_{i,j}\,x_{i,j} \\
  \text{s.t.}\quad 
  & \sum_{i\in S} x_{i,j} \le 1, \quad \forall j\in C, \nonumber\\
  & \sum_{j\in C} x_{i,j} \le q_i, \quad \forall i\in S, \nonumber\\
  & x_{i,j} \in \{0,1\}, \nonumber
\end{align}
where \(q_i\) denotes the maximum number of concurrent requests agent \(i\) can serve. Problem~\eqref{eq:optimal} is a maximum-weight bipartite \(b\)-matching problem and admits a polynomial-time solution via a min-cost max-flow (MCMF) reduction. Specifically, shown in Figure~\ref{fig:IceMAS}(c), we construct a flow network with a source connected to each task node with unit capacity, task-to-agent edges of unit capacity and cost \(-w_{i,j}\), and agent-to-sink edges with capacity \(q_i\). The edge owns positive cost will be discarded to avoid inefficient matches. Any integral min-cost max-flow in this network corresponds to a feasible matching that minimizes total cost, and hence maximizes total welfare.

Given the welfare-maximizing allocation, IEMAS implements VCG (Clarke pivot) payments to ensure incentive compatibility. Let \(W(C)\) denote the optimal welfare of~\eqref{eq:optimal} and \(W(C\setminus\{j\})\) the optimal welfare when request \(j\) is removed. If client request \(j\) is matched to agent \(i\) in the optimal allocation, its VCG payment is defined as
\vspace{-5pt}
\begin{equation}
\label{eq:vcg}
\setlength\belowdisplayskip{3pt}
p_j \;=\; W(C\setminus\{j\}) \;-\; \bigl(W(C) - w_{i,j}\bigr) + c_{i,j}, 
\end{equation}
which equals the externality that request \(j\) imposes on all other participants. This mechanism is ex-post efficient and dominant-strategy incentive compatible for clients. Symmetric payments or rebates can be defined for agents when needed.

While VCG mechanisms do not, in general, guarantee budget balance—the total payments collected from clients may differ from the total compensation paid to agents—they provide a principled benchmark for truthful, welfare-optimal allocation. Moreover, although computing VCG payments naïvely requires resolving~\eqref{eq:optimal} once per request, in practice this overhead can be significantly reduced using warm-started min-cost flow solvers and incremental re-optimization within each proxy hub.

\subsection{Algorithmic properties}
\label{sec:theoretical_analysis}

In this section, we provide a formal analysis of the economic and algorithmic properties of IEMAS. Specifically, we demonstrate that the coupling of the Min-Cost Max-Flow (MCMF) algorithm with the VCG payment rule is theoretically sound. We prove that the flow-based allocation is exact—a strict prerequisite for VCG truthfulness—thereby resolving the potential conflict between computational tractability and incentive compatibility.

\textbf{Exactness of Allocation via MCMF:}
The validity of VCG relies strictly on the allocation rule being allocatively efficient (i.e., finding the global maximum of the welfare function). If the allocation algorithm were an approximation, the dominant-strategy equilibrium of VCG would collapse. We show that our flow-based formulation is exact.

Recall the welfare maximization problem defined in Eq.~\ref{eq:optimal}. Let $G=(V, E)$ be the constructed bipartite flow network. We map the welfare weights $w_{ij}$ to edge costs such that $cost_{ij} = -w_{ij}$.

\begin{theorem}[\textbf{Allocative Efficiency}]
\label{thm:optimality}
The assignment $x^*$ produced by the Min-Cost Max-Flow (MCMF) algorithm in the IEMAS flow network maximizes the total social welfare $W(\mathcal{C})$ subject to capacity constraints.
\end{theorem}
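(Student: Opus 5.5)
The plan is to set up a cost-preserving bijection between feasible solutions of Problem~\eqref{eq:optimal} and integral flows in the IEMAS network $G$, and then invoke integrality of min-cost flow.

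\textbf{Network.} The network $G$ has a source $s$, one node per request $j\in\mathcal{C}$, one node per agent $i\in\mathcal{S}$, and a sink $t$. Its edges are:
\begin{itemize}
\item $(s,j)$ with capacity $1$ and cost $0$;
\item $(j,i)$ with capacity $1$ and cost $-w_{ij}$, kept only when $w_{ij}\ge 0$;
\item $(i,t)$ with capacity $q_i$ and cost $0$.
\end{itemize}

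\textbf{Step 1: correspondence.} Given a feasible $x$, set $f_{ji}=x_{ij}$, $f_{sj}=\sum_i x_{ij}$, and $f_{it}=\sum_j x_{ij}$. The two constraints of \eqref{eq:optimal} are exactly the capacity constraints on the $(s,j)$ and $(i,t)$ edges, and flow conservation holds by construction. Conversely, any integral feasible flow has $f_{ji}\in\{0,1\}$ and yields a feasible $x$. In both directions the flow cost is $\sum_{i,j}(-w_{ij})x_{ij}=-W(x)$.

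\textbf{Step 2: integrality.} The node--arc incidence matrix of $G$ is totally unimodular, and all capacities are integers. So the min-cost flow LP has integral optimal vertices. Standard MCMF algorithms, such as successive shortest paths or cycle canceling, augment by integer amounts and therefore return an integral optimum. Combined with Step~1, the minimum cost over integral flows equals $-\max_x W(x)$.

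\textbf{Step 3: the ``max-flow'' qualifier (main obstacle).} MCMF first forces the flow value to be maximal and only then minimizes cost. A priori this could exclude the welfare-optimal matching, which need not saturate the maximum number of requests. I would handle it as follows.
\begin{itemize}
\item Discarded pairs have $w_{ij}<0$, so every remaining edge cost is $\le 0$.
\item I would add a bypass arc $s\to t$ with cost $0$ and capacity $|\mathcal{C}|$. Then every flow can be padded to maximum value at no extra cost, so the min cost among maximum flows equals the min cost over all flows.
\item As a first attempt without the bypass, I would argue that, since costs are nonpositive, no shortest augmenting path ever increases cost. But a path that reroutes through a previously used edge can have positive reduced cost, so this argument needs care.
\item The clean alternative is to run successive shortest paths and stop as soon as the shortest path has nonnegative cost. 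The resulting flow is a min-cost flow over all flow values, since each intermediate flow is min-cost for its value and the path costs are nondecreasing.
\end{itemize}
I expect pinning down this stopping rule, or the equivalent bypass construction, to be the delicate point. Once it is fixed, $x^*$ attains $\max W(\mathcal{C})$ subject to the capacity constraints, which proves Theorem~\ref{thm:optimality}.
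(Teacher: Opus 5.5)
Your Steps 1 and 2 are the paper's entire proof. The appendix argument consists only of the cost--welfare correspondence plus total unimodularity (via Hoffman--Kruskal) with integral capacities. It then asserts that MCMF, as ``an exact solution to the minimum cost circulation problem,'' returns the welfare maximizer.

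Your Step 3 raises a point the paper never addresses, and you are right to raise it. It is not merely delicate: it breaks the statement as literally written. Take requests $j_1,j_2$ and agents $i_1,i_2$ with $q_{i_1}=q_{i_2}=1$, $w_{i_1 j_1}=1$, $w_{i_2 j_1}=3$, $w_{i_2 j_2}=1$, and $w_{i_1 j_2}<0$, so that edge is pruned. The only maximum flow routes $j_1\to i_1$ and $j_2\to i_2$, with welfare $2$. Assigning only $j_1$ to $i_2$ is feasible with welfare $3$.

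Your ``first attempt'' worry is exactly what happens here. After augmenting along $s\to j_1\to i_2\to t$ (cost $-3$), the next shortest path $s\to j_2\to i_2\to j_1\to i_1\to t$ has cost $+1$, and a max-flow solver takes it anyway. So the theorem holds only in two readings. One is that the solver computes a min-cost flow over all flow values, which the paper tacitly assumes when it equates minimum flow cost with maximum welfare. The other is a min-cost max-flow on a suitably augmented network. State explicitly which object you are analyzing.

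Your early-stopping version of successive shortest paths is a sound way to get the first reading. The zero flow is optimal for value $0$ because $G$ is acyclic. Each intermediate flow is min-cost for its value, and augmenting-path costs are nondecreasing. So stopping at the first path of nonnegative cost gives a global min-cost flow.

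The bypass construction, however, fails as you specified it, so it is not ``equivalent'' to the stopping rule. A single arc $s\to t$ of capacity $|\mathcal{C}|$ raises the maximum flow value to $|\mathcal{C}|+\nu$, where $\nu$ is the largest number of requests that can be matched. A flow whose matching part has size $k<\nu$ reaches only $|\mathcal{C}|+k$ after padding, so it is still not maximum. MCMF therefore still forces a maximum-cardinality matching, and in the example above it still returns welfare $2$.

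Two constructions do work:
\begin{itemize}
\item a zero-cost, unit-capacity arc $j\to t$ for each request (equivalently, a dummy agent of capacity $|\mathcal{C}|$ with zero-weight edges);
\item your bypass together with a super-source arc $s'\to s$ of capacity $|\mathcal{C}|$ that caps total supply.
\end{itemize}
With either one, the maximum flow value is exactly $|\mathcal{C}|$, and every feasible $x$ extends to a maximum flow of cost $-W(x)$. Your Steps 1--2 then apply verbatim to a genuine min-cost max-flow computation.

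Compared with the paper, your route costs an extra step. In exchange, it makes explicit the algorithmic condition under which the exactness claim, and the VCG truthfulness built on it, actually holds.
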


\begin{proof}
See Appendix~\ref{sec:proof_eff}.
\end{proof}

\textbf{Incentive Compatibility: }Having established that MCMF yields the optimal allocation $x^*$, we now prove that the mechanism elicits truthful reporting of client value $v_i$.

\begin{theorem}[\textbf{Dominant Strategy Incentive Compatibility for Clients}]
Assuming truthful agents, reporting the true valuation $v_i$ is a dominant strategy for every client $i \in \mathcal{C}$ under the IEMAS mechanism.
\end{theorem}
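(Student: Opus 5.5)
The plan is to use the standard Clarke-pivot argument. The only paper-specific ingredient is Theorem~\ref{thm:optimality}, which says the MCMF allocation is an \emph{exact} welfare maximizer for whatever weights are reported.

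\textbf{Setup.} Fix a client $j$. Denote its true valuation for agent $i$ by $v_{i,j}$ and its report by $\tilde v_{i,j}$. Agent costs $c_{i,j}$ are truthful and computed by the proxy via Eq.~\eqref{eq:observed_cost}, and the reports of all other clients are fixed arbitrarily. Write $\tilde w$ for the reported weights, so that $\tilde w_{i,j}=\tilde v_{i,j}-c_{i,j}$ and $\tilde w_{i,k}=w_{i,k}$ for $k\neq j$. If $j$ is matched to agent $i$, its quasi-linear utility is $u_j=v_{i,j}-p_j$. Substituting Eq.~\eqref{eq:vcg} gives
\[
u_j = v_{i,j}-c_{i,j} + \bigl(\tilde W(C)-\tilde w_{i,j}\bigr) - W(C\setminus\{j\}).
\]
Here $\tilde W(C)-\tilde w_{i,j}$ is the reported welfare of all \emph{other} clients under the chosen allocation $\tilde x$, and it does not involve $j$'s report. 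So
\[
u_j = \sum_{(i',k)} w^{\mathrm{true}}_{i',k}\,\tilde x_{i',k} - W(C\setminus\{j\}),
\]
that is, the true-$j$ social welfare of $\tilde x$ minus a term independent of $j$'s report. If $j$ is unmatched, it pays nothing. Under the Clarke convention $u_j = \text{(others' welfare)}-W(C\setminus\{j\})$, and this equals the same expression because $j$ contributes $0$. I will state this convention explicitly, since Eq.~\eqref{eq:vcg} only defines the payment for matched clients.

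\textbf{Optimality step.} By Theorem~\ref{thm:optimality}, applied to the network built from the reported weights, $\tilde x$ maximizes reported welfare. Under truthful reporting, reported and true weights coincide, so the truthful allocation $x^*$ maximizes $\sum w^{\mathrm{true}} x$ over all feasible $b$-matchings. Any misreport yields some feasible $\tilde x$, whose true welfare is therefore at most that of $x^*$. Since the subtracted term $W(C\setminus\{j\})$ is the same in both cases, $u_j(\text{truth})\ge u_j(\text{misreport})$ for every profile of the other clients' reports, which is dominant-strategy incentive compatibility.

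\textbf{Main obstacle.} The step needing care is the pruning rule: edges with $w_{i,j}<0$ (positive cost) are discarded. This makes the feasible set depend on reports, which could break the comparison. I would handle it by noting that a matching using a negative-weight edge can always be improved by dropping that edge. Hence maximizing over the pruned graph equals maximizing over the full $b$-matching polytope, and the optimality comparison is over a report-independent feasible set. I would also remark that exactness is essential: with only approximate optimality the inequality $W^{\mathrm{true}}(\tilde x)\le W^{\mathrm{true}}(x^*)$ could fail. Ties in the MCMF solution are harmless, because the utility depends only on the optimal welfare value.
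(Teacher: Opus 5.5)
Your proposal is correct and follows the same Clarke-pivot route as the paper: substituting Eq.~\eqref{eq:vcg} yields $u_j=(v_{i,j}-\tilde v_{i,j})+\tilde W(\mathcal{C})-W(\mathcal{C}\setminus\{j\})$, which is the true-for-$j$ welfare of the selected allocation minus a report-independent term, and the exactness in Theorem~\ref{thm:optimality} closes the argument. You are more careful than the paper, which only evaluates the truthful case and then asserts the comparison, since you explicitly bound the misreport utility and handle unmatched clients, report-dependent pruning, and ties; the one premise you share with the paper is that the solver returns a maximum-weight $b$-matching (a min-cost flow of free value), because a literal min-cost \emph{maximum} flow can trade weight for cardinality and would break $W^{\mathrm{true}}(\tilde x)\le W^{\mathrm{true}}(x^*)$.
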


\begin{proof}
See Appendix~\ref{sec:proof_ic}.
\end{proof}

\textbf{Budget Balance.} 
While standard VCG mechanisms are not generally budget-balanced, the specific asymmetry of our market design—where agent costs are verifiable via the proxy and clients are the primary strategic actors—allows us to guarantee that the system never runs a financial deficit.

\begin{theorem}[\textbf{Weak Budget Balance}]
\label{thm:budget_balance}
The IEMAS mechanism satisfies ex-post weak budget balance. The total payment collected from clients is sufficient to cover the total service costs incurred by agents:
\vspace{-5pt}
\begin{equation}
\setlength\belowdisplayskip{3pt}
    \sum_{j \in \mathcal{C}_{matched}} p_j \ge \sum_{i \in \mathcal{S}_{active}} c_{i}.
\end{equation}
\end{theorem}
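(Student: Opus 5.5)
The plan is to show that each matched client's payment alone already covers the cost of the agent serving it, and then sum over matched clients. Fix a matched client $j$ assigned to agent $i$ in the optimal allocation $x^*$. By Eq.~\eqref{eq:vcg},
\begin{equation*}
p_j - c_{i,j} \;=\; W(C\setminus\{j\}) - \bigl(W(C) - w_{i,j}\bigr).
\end{equation*}
So it suffices to prove $W(C\setminus\{j\}) \ge W(C) - w_{i,j}$ for every matched $j$. This is the standard "no negative externality" or Clarke-pivot nonnegativity property.

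To prove it, I would restrict $x^*$ to the market without $j$. That is, take $x^*$ and delete the edge $(i,j)$, setting $x_{i,j}=0$.

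First, this restricted assignment is feasible for problem~\eqref{eq:optimal} with client set $C\setminus\{j\}$. Every other client is still matched at most once, and the load on agent $i$ only drops by one, so all capacity constraints $q_{i'}$ still hold.

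Second, its welfare is exactly $W(C) - w_{i,j}$, since the weights of the remaining edges are unchanged.

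Third, $W(C\setminus\{j\})$ is the \emph{optimal} welfare of the reduced problem, so it is at least the welfare of this feasible candidate. Here I invoke Theorem~\ref{thm:optimality}: MCMF returns an exact optimum both for the full instance and for the reduced instance used to compute payments. Exactness matters because if the reduced welfare were only approximated, the inequality could fail. With it, $p_j \ge c_{i,j}$ for each matched $j$.

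Summing over $j\in\mathcal{C}_{matched}$ gives $\sum_j p_j \ge \sum_{(i,j): x^*_{i,j}=1} c_{i,j}$. The right-hand side is the total service cost incurred by active agents, which is how I would read $\sum_{i\in\mathcal{S}_{active}} c_i$. The statement's notation $c_i$ is loose, so I would state explicitly that the agent-side cost is the sum of per-task costs over assigned tasks, each verifiable by the proxy via Eq.~\eqref{eq:observed_cost}. Agents receive exactly these costs as compensation, since their costs are truthful and not strategic. The platform's surplus is then the nonnegative sum of the externality terms.

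The step I expect to be the main obstacle is this bookkeeping rather than the inequality itself. The payment in Eq.~\eqref{eq:vcg} is written with $c_{i,j}$, but welfare uses the predicted $\hat c_{i,j}$, and the statement aggregates per agent. I would make the convention explicit that the cost added in $p_j$ is the same quantity paid to the agent. Under that convention the per-client inequality transfers to the totals, and the argument holds ex post for every realization of the reports.
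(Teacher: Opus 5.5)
Your proof is correct and follows the paper's route. Both use the same per-transaction identity $p_j - c_{i,j} = W(\mathcal{C}\setminus\{j\}) - \bigl(W(\mathcal{C}) - w_{i,j}\bigr)$ summed over matched clients, and your restriction-of-$x^*$ feasibility argument gives a rigorous proof of the key inequality $W(\mathcal{C}\setminus\{j\}) \ge W(\mathcal{C}) - w_{i,j}$, which the paper justifies only informally via ``resource contention.''

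One caution on your appeal to Theorem~\ref{thm:optimality}. As you note, the argument needs $W(\cdot)$ to be the exact optimum of \eqref{eq:optimal}, and that holds only if the flow value is left free. A solver that insists on maximum flow can trade weight for cardinality, and then $p_j < c_{i,j}$ becomes possible.
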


\begin{proof}
See Appendix~\ref{sec:proof_bb}.
\end{proof}

\textbf{Computational Consistency:}
While VCG theoretically requires resolving the optimization problem $|\mathcal{S}|+1$ times, the coupling with MCMF mitigates this overhead. The computation of $W(\mathcal{C} \setminus \{i\})$ is equivalent to finding a minimum-cost flow adjustment on the \textit{residual graph} $G_f$ obtained after the initial allocation $x^*$~\cite{ahuja1993network}. By reusing the dual potentials (conceptually similar to Johnson's re-weighting algorithm) from the primary solution, the marginal cost of computing payments is significantly lower than solving from scratch~\cite{hershberger2001vickrey}. Furthermore, in bipartite assignment settings, it has been established that VCG payments can often be derived directly from the optimal dual variables of the linear relaxation, rendering the re-optimization step efficient or even instantaneous~\cite{leonard1983elicitation}. This ensures the mechanism remains practical for real-time routing.

\subsection{Agentic Hub Architecture}
\label{sec:hub-design}

The system model described above faces two fundamental challenges when deployed in large-scale, open LLM agent networks. \textbf{First}, realistic deployments may involve hundreds or thousands of independently operated agents and a high volume of concurrent client requests. Performing global all-to-all prediction, auction, and scheduling—especially during the performance estimation and VCG matching phases—would incur prohibitive latency, communication overhead, and computational cost. \textbf{Second}, when agent heterogeneity is substantial, VCG-based mechanisms may violate individual rationality (IR) constraints~\cite{liu2025cast}, as implied by the Green--Laffont impossibility theorem, which precludes the simultaneous satisfaction of incentive compatibility, individual rationality, budget balance, and allocative efficiency in general settings~\cite{green1977characterization}. We demonstrate this effect in Appendix~\ref{app:cluster}.

To address these limitations, prior work has shown that \emph{clustering-based decomposition} is an effective supplement to incentive mechanisms: it bounds problem size, enables parallel intra-cluster computation, and mitigates IR conflicts by reducing heterogeneity within each market~\cite{liu2025cast}. Building on this insight, IEMAS adopts a proxy-hub architecture to support scalable and incentive-aware scheduling in heterogeneous LLM ecosystems. Specifically, agents service pods are clustered \emph{a priori} into multiple proxy hubs according to relatively static capability signals, such as model scale, domain specialization, and benchmarked performance (e.g., OpenCompass evaluations~\cite{2023opencompass}). Incoming client requests are first routed to an appropriate hub using a lightweight, coarse-grained classifier based on task domain and quality-of-service requirements. Fine-grained IEMAS routing, prediction, and VCG-based matching are then executed locally within the selected hub. This two-stage routing and allocation process substantially reduces the dimensionality of the matching problem while preserving the economic efficiency benefits of incentive-aware scheduling.

Each proxy hub acts as a mediation layer providing authentication, admission control, fine-grained accounting, and KV-cache management, thereby decoupling global market coordination from low-level inference execution. Instead of continuous real-time communication, proxy hubs periodically publish standardized, privacy-preserving metadata—such as price signals, available capacity, and compact cache-state summaries—which is sufficient for constructing efficient matchings while reducing bandwidth overhead and protecting proprietary model details. Furthermore, since the incentive mechanisms in \S\ref{sec:mechanisms} operate on bipartite request--agent graphs, proxy hubs naturally host auction execution, VCG payment computation, and cache-aware scheduling. By maintaining local KV-prefix ledger in \S\ref{sec:kv-design}, proxies enable reuse-aware matching without exposing raw prompts, model parameters, or inference traces. This separation of concerns allows IEMAS to achieve scalability, economic robustness, cache efficiency, and privacy-preserving distributed inference at Internet scale. Implementation details are deferred to Appendix~\ref{sec:implement}.

\begin{table*}[t]
 \setlength{\abovecaptionskip}{-0.3cm}
  \small
    \centering
    \setlength{\tabcolsep}{4pt} 
    \begin{tabular}{lccccccccc}
        \toprule
        \multirow{2}{*}{\textbf{Method}} & \multicolumn{3}{c}{\textbf{CoQA (Multi-turn)}} & \multicolumn{3}{c}{\textbf{QuAC (Long Context)}} & \multicolumn{3}{c}{\textbf{HotpotQA (Reasoning)}} \\
        \cmidrule(lr){2-4} \cmidrule(lr){5-7} \cmidrule(lr){8-10}
         & \textbf{KV (\%)} & \textbf{Cost} & \textbf{Lat (ms)} & \textbf{KV (\%)} & \textbf{Cost} & \textbf{Lat (ms)} & \textbf{KV (\%)} & \textbf{Cost} & \textbf{Lat (ms)} \\
        \midrule
        GraphRouter & 0.364 & 10.655 & 423.2 & 0.310 & 17.404 & 443.1 & 0.042 & 42.243 & 1470.3 \\
        GMTRouter   & 0.531 & 11.946 & 342.2 & 0.319 & 25.302 & 472.6 & 0.048 & 44.365 & 372.0 \\
        MFRouter    & 0.384 & 10.507 & 414.5 & 0.488 & 15.372 & 306.2 & 0.054 & 39.101 & 389.2 \\
        RouterDC    & 0.482 & 9.527 & 357.5 & 0.242 & 19.443 & 404.9 & 0.046 & 34.180 & 2139.8 \\
        Random      & 0.165 & 13.65 & 452.6 & 0.343 & 24.927 & 189.1 & 0.044 & 43.683 & 404.4 \\
        \midrule
        \textbf{IEMAS (Ours)} & \textbf{0.802} & \textbf{6.944} & \textbf{354.5} & \textbf{0.636} & \textbf{15.868} & \textbf{162.1} & \textbf{0.089} & \textbf{28.694} & \textbf{284.2} \\
        \bottomrule
    \end{tabular}
    \caption{Average System Efficiency Comparison across three benchmarks. \textbf{KV (\%)} denotes cache hit rate ($\uparrow$) and \textbf{Lat} is the TTFT ($\downarrow$).}
    \label{tab:welfare_expanded}
\end{table*}

\section{Experiments}
\label{sec:experiments}

We evaluate IEMAS on critical dimensions: (1) \textbf{System Efficiency}, measuring whether cache-aware routing reduces latency and overhead; (2) \textbf{Economic Robustness}, verifying that the mechanism incentivizes truthful reporting and maximizes social welfare.

\subsection{Experimental Setup}

\noindent
\textbf{Experimental Setup.}
We profiled the \texttt{vLLM} inference engine~\cite{kwon2023vllm} on heterogeneous nodes equipped with NVIDIA RTX 4090 and RTX 6000 GPUs. To faithfully simulate the resource constraints and frequent cache evictions typical of open agent networks, we fixed the concurrent query batch buffer size at 12 and restricted the vLLM \texttt{gpu\_memory\_utilization} parameter to 0.6. Our agent population is instantiated with a diverse set of models, including LLaMA-3-7B~\cite{dubey2024llama}, Qwen-4B, and Qwen-8B~\cite{yang2025qwen3}. We evaluate system performance across three distinct interaction modalities: \textbf{CoQA}~\cite{reddy2019coqa} for multi-turn dialogue preservation, \textbf{QuAC}~\cite{choi2018quac} for long-context handling, and \textbf{HotpotQA}~\cite{yang2018hotpotqa} for complex reasoning tasks. We compare IEMAS against five routing strategies:
\begin{itemize}[leftmargin=*,topsep=0em]
\setlength{\parskip}{0em}
\setlength{\itemsep}{0em}
    \item \textbf{GraphRouter \cite{feng2025graphrouter}:} a graph-based LLM router that models tasks, queries, and models as a heterogeneous graph for effect/cost estimation.
    \item \textbf{GMTRouter \cite{xie2025gmtrouter}:} a personalized router using a heterogeneous graph to capture user–query–model interaction preferences. 
    \item \textbf{MFRouter \cite{ong2025routellm}:} a matrix factorization-based router that treats routing as a recommendation.
    \item \textbf{RouterDC \cite{chen2024routerdc}:} a dual-contrastive learning router that fine-tunes a pre-trained language model to align query representations.
    \item \textbf{Random:} routes queries uniformly at random as a non-learned baseline.
\end{itemize}

\begin{figure*}[htp]
\setlength{\belowcaptionskip}{-0.4cm} 
 \setlength{\abovecaptionskip}{0cm}
    \centering
        
    \begin{minipage}{0.68\textwidth}
        \centering
        \includegraphics[width=\linewidth]{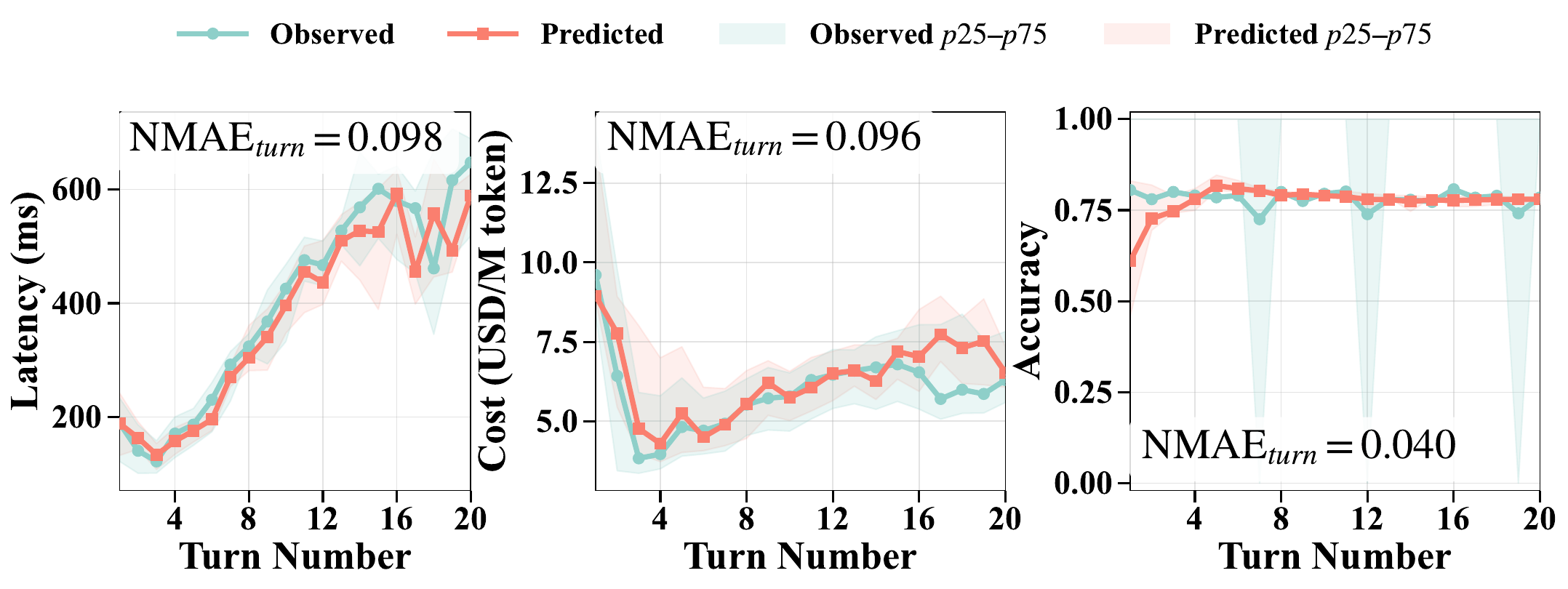}
        \caption{\small The Predictive Model for QoS Factors of CoQA dataset.}
        \label{fig:predict}
    \end{minipage}%
    \hfill
    \begin{minipage}{0.3\textwidth}
        \centering
        \includegraphics[width=\linewidth]{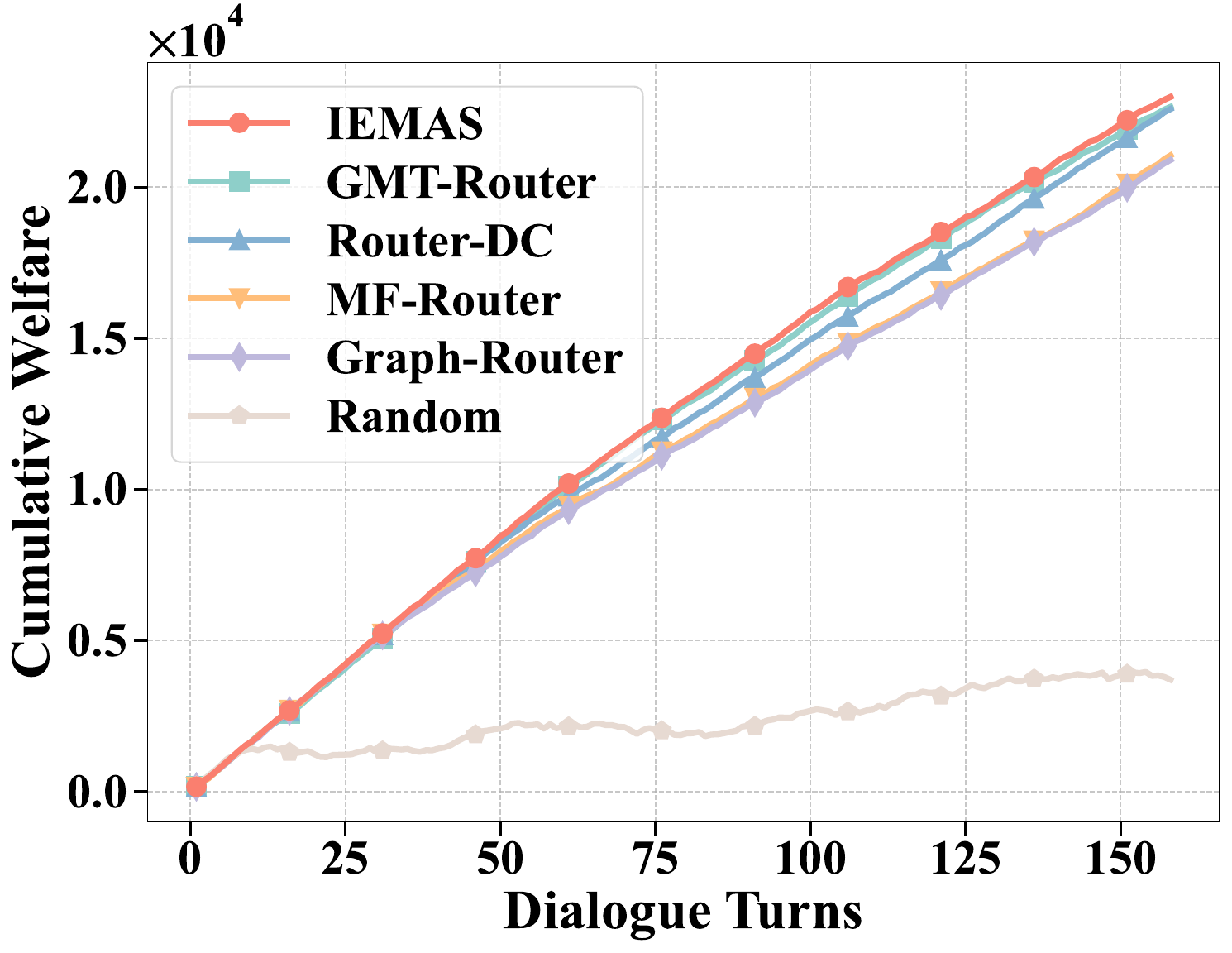}
        \caption{\small Social Welfare Comparison.}
        \label{fig:welfare}
    \end{minipage}%

\end{figure*}

\begin{figure}
\setlength{\belowcaptionskip}{-0.4cm} 
 \setlength{\abovecaptionskip}{0cm}
    \centering
    \begin{minipage}{0.23\textwidth}
        \centering
        \includegraphics[width=\linewidth]{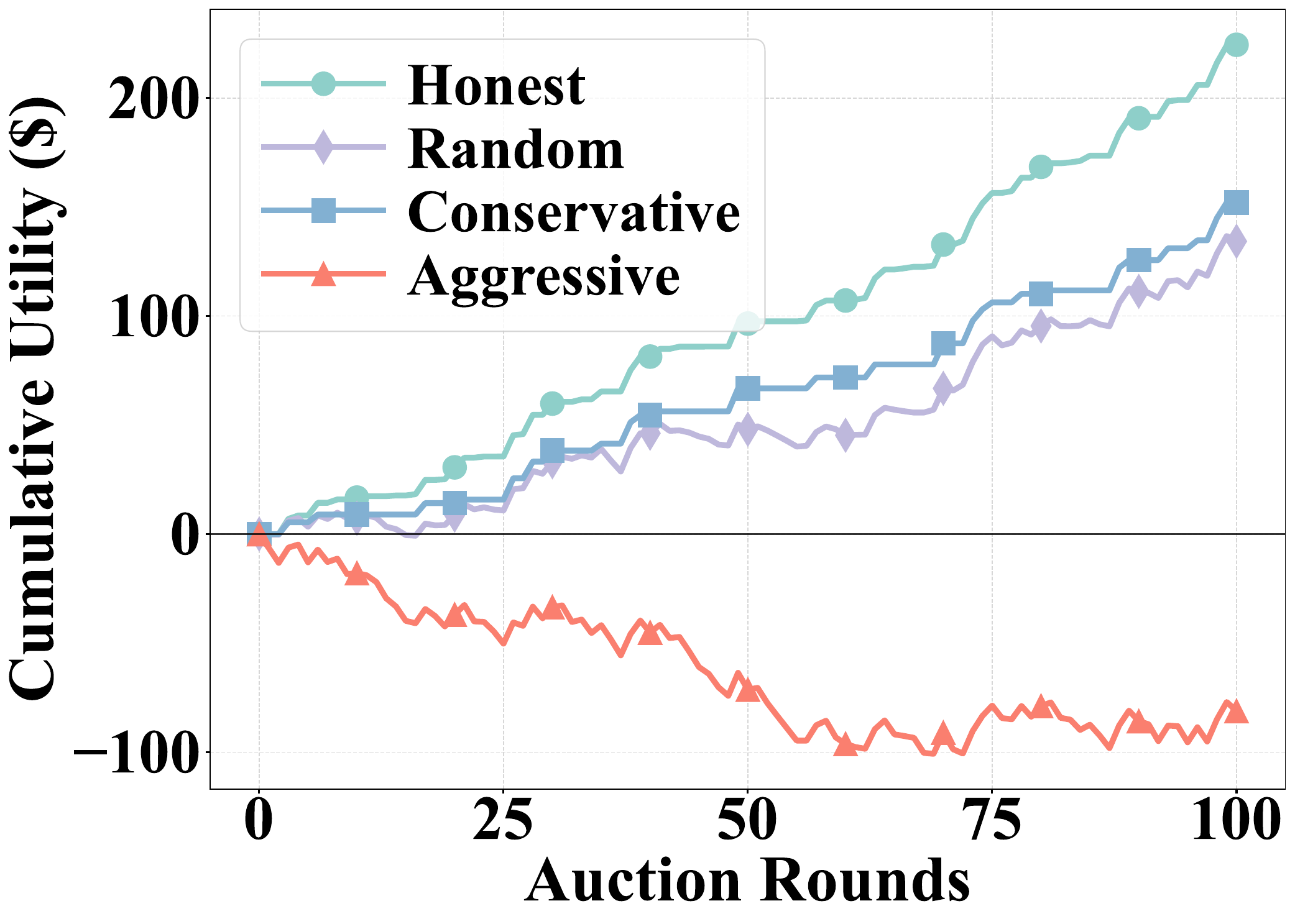}
        \caption{\small The Utility under Different Auction Strategies in VCG Auction.}
        \label{fig:vcg}
    \end{minipage}%
    \hfill
    \begin{minipage}{0.242\textwidth}
        \centering
        \includegraphics[width=\linewidth]{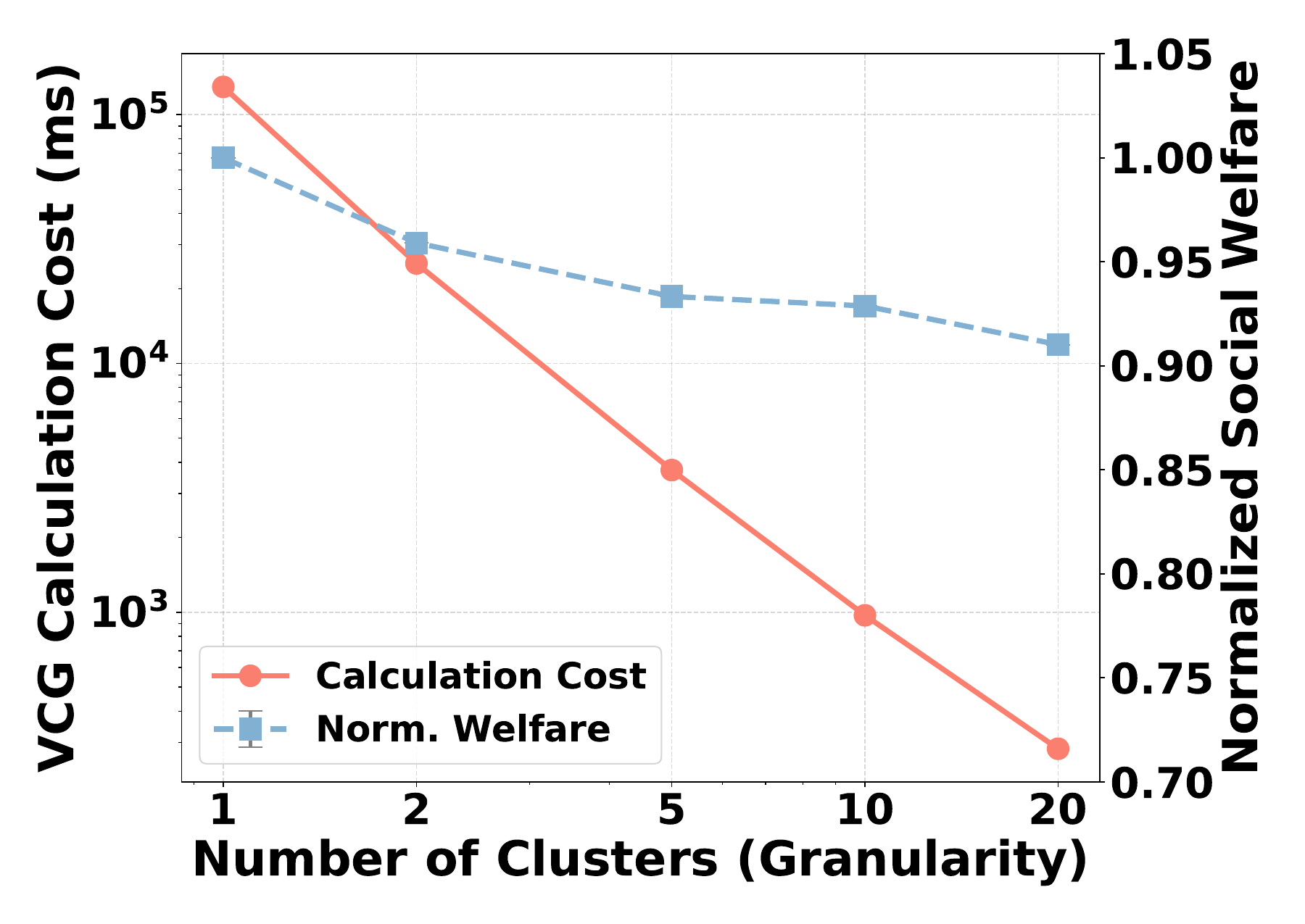}
        \caption{\small The Social Welfare and Time Cost under Different Cluster Levels in Routing. }
        \label{fig:cluser_time}
    \end{minipage}%
\end{figure}

\subsection{System Efficiency}

We evaluate resource reuse efficiency via the KV-cache hit rate, Latency, and Service Cost, as detailed in Table~\ref{tab:welfare_expanded}. 

\textbf{KV Cache Reuse.} On multi-turn datasets where context retention is critical, IEMAS demonstrates superior affinity scheduling. Specifically, on \textbf{CoQA}, IEMAS achieves a dominant cache hit rate of \textbf{80.2\%}, significantly outperforming the strongest baseline (GMTRouter at 53.1\%). Even on long-context tasks like \textbf{QuAC}, IEMAS maintains a \textbf{63.6\%} hit rate compared to the baseline average of $\sim$30-40\%. This confirms that our affinity-scoring mechanism successfully preserves context locality even in complex dialogue scenarios.

\textbf{Latency Reduction.} The impact of cache reuse on TTFT is evident in the \textit{Lat} column. On the \textbf{HotpotQA} reasoning benchmark, IEMAS reduces median latency to \textbf{284.2ms}, avoiding the severe congestion observed in RouterDC (2139.8ms) and outperforming GMTRouter (372.0ms). Similarly, on \textbf{QuAC}, IEMAS achieves the lowest latency of \textbf{162.1ms}, a \textbf{1.9$\times$} speedup over MFRouter (306.2ms), by consistently routing follow-up queries to agents holding the active KV-prefix and bypassing the prefill phase.

\textbf{Cost Efficiency.} Maximizing cache hit rates directly translates to economic savings. On CoQA, IEMAS yields an average cost of \textbf{6.944}, representing a \textbf{$\approx$35\% reduction} compared to the next most efficient method (MFRouter at 10.507) and a massive reduction compared to Random routing (13.65). While HotpotQA exhibits lower overall cache reuse (0.089) due to task nature, IEMAS still achieves the lowest cost (\textbf{28.694}) among all methods, proving that the VCG-based mechanism effectively optimizes for cost even when cache opportunities are scarce.

\subsection{Predictive Accuracy}
The efficiency of the auction mechanism hinges on the Proxy Hub's ability to accurately calibrate agent Latency ($L$), Cost ($C$), and Accuracy ($P$). As illustrated in Figure~\ref{fig:predict}, our online Hoeffding Tree predictor demonstrates robust predictive capability across multi-turn interactions. The figure compares observed versus predicted values over 20 turns, revealing tight alignment between the model's estimates and ground truth. Specifically, the model achieves a low Normalized Mean Absolute Error (NMAE) of \textbf{0.101} for latency and \textbf{0.090} for cost, effectively capturing the non-linear increase in resource consumption as context length grows. Furthermore, accuracy prediction remains stable with an NMAE of \textbf{0.069}, acting as a reliable expected value filter against the stochastic noise of observed performance, ensuring the auction mechanism operates on calibrated risk bounds.

\subsection{Economic Analysis: Truthfulness and Welfare}

A core contribution of IEMAS is Incentive Compatibility (IC). We validate this by introducing strategic agents who attempt to game the system.

\textbf{Truthfulness Validation.} We simulate four agent bidding strategy: Honest, aggressive, conservative and random who report truthful, consistly high/low and random based on the true value. Figure~\ref{fig:vcg} illustrates the cumulative utility (profit) over 100 auction rounds. Under the VCG-based payment rule, truthful agents see steady profit growth. In contrast, the other strategic agents incur penalties due to performance deviations, resulting in  strictly low profit at any round. This empirically demonstrates that truth-telling is a dominant strategy in IEMAS.

\textbf{Clustering Trade-off Analysis.} To quantify the balance between computational scalability and allocative efficiency, we conducted a sensitivity analysis by varying the number of proxy hub clusters $K$. We maintained a constant global environment of $M=100$ agents and $N=200$ concurrent tasks. As illustrated in Figure~\ref{fig:cluser_time}, increasing $K$ leads to a sharp reduction in solver latency, as the complexity of the Min-Cost Max-Flow (MCMF) algorithm scales super-linearly with problem size. Crucially, this performance gain incurs only a marginal degradation in global social welfare. The results confirm that domain-based clustering effectively partitions the market, significantly reducing scheduling overhead while maintaining near-optimal allocation efficiency. We further analysis the effect of different clustering schemes in Appendix~\ref{app:cluster}.

\textbf{Social Welfare.} We define Social Welfare as the cumulative sum of client utility minus agent costs over the session duration. Figure~\ref{fig:welfare} illustrates the welfare accumulation across 160 dialogue turns. \textbf{IEMAS} consistently maintains the steepest growth trajectory, demonstrating superior long-term efficiency. While state-of-the-art baselines like \textit{GMT-Router} and \textit{Router-DC} track closely, IEMAS maintains a sustained lead, exceeding all baselines by the final turn. This gap highlights the advantage of our incentive-compatible mechanism: by explicitly pricing cache affinity in the VCG valuation, IEMAS minimizes the cost of context that degrades the net utility of other routing strategies over long horizons. In contrast, the \textbf{Random} baseline fails to generate meaningful welfare, validating the necessity of intelligent coordination.

\vspace{-5pt}
\section{Conclusion}
The transition to an open \textbf{Internet of Agents} requires reconciling the strategic autonomy of decentralized providers with the physical realities of efficient LLM inference. In this work, we introduced \textbf{IEMAS}, a routing framework that bridges this gap by treating the KV-cache not merely as a system buffer, but as a pricable economic asset. By co-designing a resource-aware predictive model with a truthful VCG mechanism, IEMAS solves the fundamental tension between maximizing global social welfare and ensuring individual rationality. Our theoretical analysis confirms that the mechanism guarantees truthful reporting and weak budget balance. Empirically, extensive simulations demonstrate that this alignment yields tangible performance gains: IEMAS achieves an \textbf{80\%} KV-cache hit rate, reduces average service cost by \textbf{35\%}, and lowers end-to-end latency by \textbf{2.9$\times$} compared to state-of-the-art baselines. These results suggest that rigorous mechanism design, when tightly coupled with system-level observables, is a prerequisite for scaling the future Agentic Web.

\section*{Impact Statement}

This paper presents work whose goal is to advance the field of 
Machine Learning. There are many potential societal consequences 
of our work, none which we feel must be specifically highlighted here.

\bibliography{example_paper,ref,related}
\bibliographystyle{icml2026}

\newpage
\appendix
\onecolumn

\section{Proof}

\subsection{Efficiency}
\label{sec:proof_eff}

The MCMF algorithm minimizes the total cost function on the network:
\vspace{-5pt}
\begin{equation}
\setlength\belowdisplayskip{3pt}
    \min_{f} \sum_{(u,v) \in E} cost_{uv} \cdot f_{uv} \iff \max_{x} \sum_{i \in \mathcal{S}} \sum_{j \in \mathcal{C}} w_{ij} x_{ij}
\end{equation}
The constraint matrix $A$ of this bipartite matching problem corresponds to the node-edge incidence matrix. The rows represent flow conservation constraints at Task nodes $j$ (capacity 1) and Agent nodes $i$ (capacity $B_i$). By the \textit{Hoffman-Kruskal Theorem}~\cite{hoffman1956integral}, the incidence matrix of a bipartite graph is \textit{Totally Unimodular} (TU). A fundamental property of TU matrices is that for any integral capacity vector $b$ (here, $1$ and $B_i$ are integers), the polyhedron of feasible solutions $P = \{x | Ax \le b, x \ge 0\}$ has integral vertices~\cite{schrijver2003combinatorial}.
Consequently, the solution found by the MCMF algorithm—which solves the linear relaxation of the problem—is guaranteed to be integral ($x_{ij}^* \in \{0, 1\}$). Since MCMF provides an exact solution to the minimum cost circulation problem, the resulting allocation $x^*$ is the global maximizer of social welfare.

\subsection{Truthfulness}
\label{sec:proof_ic}

Let $u_j(\hat{v}_j, v_{-j})$ denote the utility of client $i$ when reporting valuation $\hat{v}_j$, while other clients report $v_{-j}$. The utility is defined as the realized value minus the payment:
\vspace{-5pt}
\begin{equation}
\setlength\belowdisplayskip{3pt}
    u_j = v_j - p_j
\end{equation}
Substitute the VCG payment rule from Eq.~\eqref{eq:vcg} into the utility function. Note that $W(\mathcal{C})$ depends on the reported $\hat{v}_j$, so we expand $W(\mathcal{C}) = (\hat{v}_j - c_{i,j}) + \sum_{k \neq j} w_{k, s_k}^*$, where the sum represents the optimal welfare of all other matched pairs in the presence of $j$.
\vspace{-5pt}
\begin{align}
\setlength\belowdisplayskip{3pt}
    u_j &= v_j - \left[ W(\mathcal{C}\setminus\{j\}) - (W(\mathcal{C}) - (\hat{v}_j - c_{i,j})) + c_{i,j} \right] \nonumber \\
    &= v_j - W(\mathcal{C}\setminus\{j\}) + W(\mathcal{C}) - \hat{v}_j + c_{i,j} - c_{i,j} \nonumber \\
    &= (v_j- \hat{v}_j) + W(\mathcal{C}) - W(\mathcal{C}\setminus\{j\})
\end{align}
If the client reports truthfully such that $\hat{v}_j = v_j$, the term $(v_j - \hat{v}_j)$ vanishes:
\vspace{-5pt}
\begin{equation}
\setlength\belowdisplayskip{3pt}
    u_j = \underbrace{W(\mathcal{C})}_{\text{Total Social Welfare}} - \underbrace{W(\mathcal{C} \setminus \{j\})}_{\text{Constant independent of } j}
\end{equation}
The term $W(\mathcal{C} \setminus \{j\})$ represents the system welfare if client $j$ were absent, which is independent of client $j$'s strategy. Thus, maximizing individual utility $u_j$ is mathematically equivalent to maximizing the total social welfare $W(\mathcal{C})$. 
Since the MCMF algorithm finds the exact allocation that maximizes $W(\mathcal{C})$ based on reported values, the client maximizes their own utility $u_j$ if and only if they provide the true valuation $v_j$ that allows the mechanism to optimize the true social welfare. Under-reporting or over-reporting $v_j$ may lead to a suboptimal allocation (or no allocation) that yields a strictly lower utility.

\subsection{Budget Balance}
\label{sec:proof_bb}

Consider a single matched transaction between client \(j\) and agent \(i\). The platform collects \(p_j\) from the client and compensates the agent \(c_{i,j}\). The platform's net surplus for this transaction is:
\begin{align}
    \Delta_j &= p_j - c_{i,j} \nonumber \\
    \text{Using Eq.~\eqref{eq:vcg}:} \quad \Delta_j &= \left[ W(\mathcal{C}\setminus\{j\}) - (W(\mathcal{C}) - w_{i,j}) + c_{i,j} \right] - c_{i,j} \nonumber \\
    \Delta_j &= W(\mathcal{C}\setminus\{j\}) - \bigl(W(\mathcal{C}) - w_{i,j}\bigr)
\end{align}
Here, \(W(\mathcal{C}\setminus\{j\})\) represents the maximum social welfare achievable by the remaining participants \emph{without} client \(j\). The term \((W(\mathcal{C}) - w_{i,j})\) represents the actual welfare of the remaining participants in the optimal allocation \emph{with} client \(j\) present.
Since agent capacity is finite (resource contention), the presence of client \(j\) can only decrease or leave unchanged the aggregate welfare available to others (by consuming a slot that might have gone to another high-value task). It cannot increase the welfare of others. Therefore:
\begin{equation}
    W(\mathcal{C}\setminus\{j\}) \ge W(\mathcal{C}) - w_{i,j} \implies \Delta_j \ge 0.
\end{equation}
Since the surplus \(\Delta_j\) is non-negative for every matched request, the sum over all requests is non-negative.

\noindent
\textit{Remark:} The non-negative surplus \(\sum \Delta_i\) represents the "information rent" or externality tax collected by the protocol. In the IEMAS implementation, this surplus can be retained by the Hub as a maintenance fee or redistributed to agents as a long-term participation incentive (e.g., staking rewards) without violating the budget constraint.

\section{Extended Analysis}

\subsection{Cluster Strategy Effect Analysis}
\label{app:cluster}

As discussed in Section 4.2, IEMAS employs the VCG mechanism to ensure incentive compatibility. While theoretically robust, the computational cost of calculating Clarke pivot payments is substantial. Specifically, calculating the externality for each request requires resolving the Min-Cost Max-Flow (MCMF) problem $N$ times for $N$ concurrent requests. In a global, flat market with thousands of agents, this $O(N \cdot T_{MCMF})$ complexity becomes prohibitive for real-time inference routing.

To address this, Section 4.4 introduces a hierarchical Proxy Hub architecture. While clustering agents reduces the search space—potentially lowering the theoretical global maximum social welfare (SW)—our analysis shows that the specific characteristics of LLM inference mitigate this loss.

\textbf{Market Fragmentation vs. Cache Locality.} 
In generic resource allocation, partitioning agents into disjoint clusters restricts the solution space, creating a "Fragmentation Penalty" where a task cannot access an idle agent in a different cluster. However, LLM inference efficiency is dominated by KV-cache locality. Global routing often suffers from context thrashing, where semantically similar queries are dispersed. By clustering agents based on domain specialization (as verified in Appendix A.1), the system maximizes intra-cluster cache reuse ($o_{ij}$). This gain in execution efficiency (lower cost $C_i$ and latency $L_j$) effectively offsets the loss in matching optionality.

\begin{figure}[h!]
    \centering
    \includegraphics[width=0.6\textwidth]{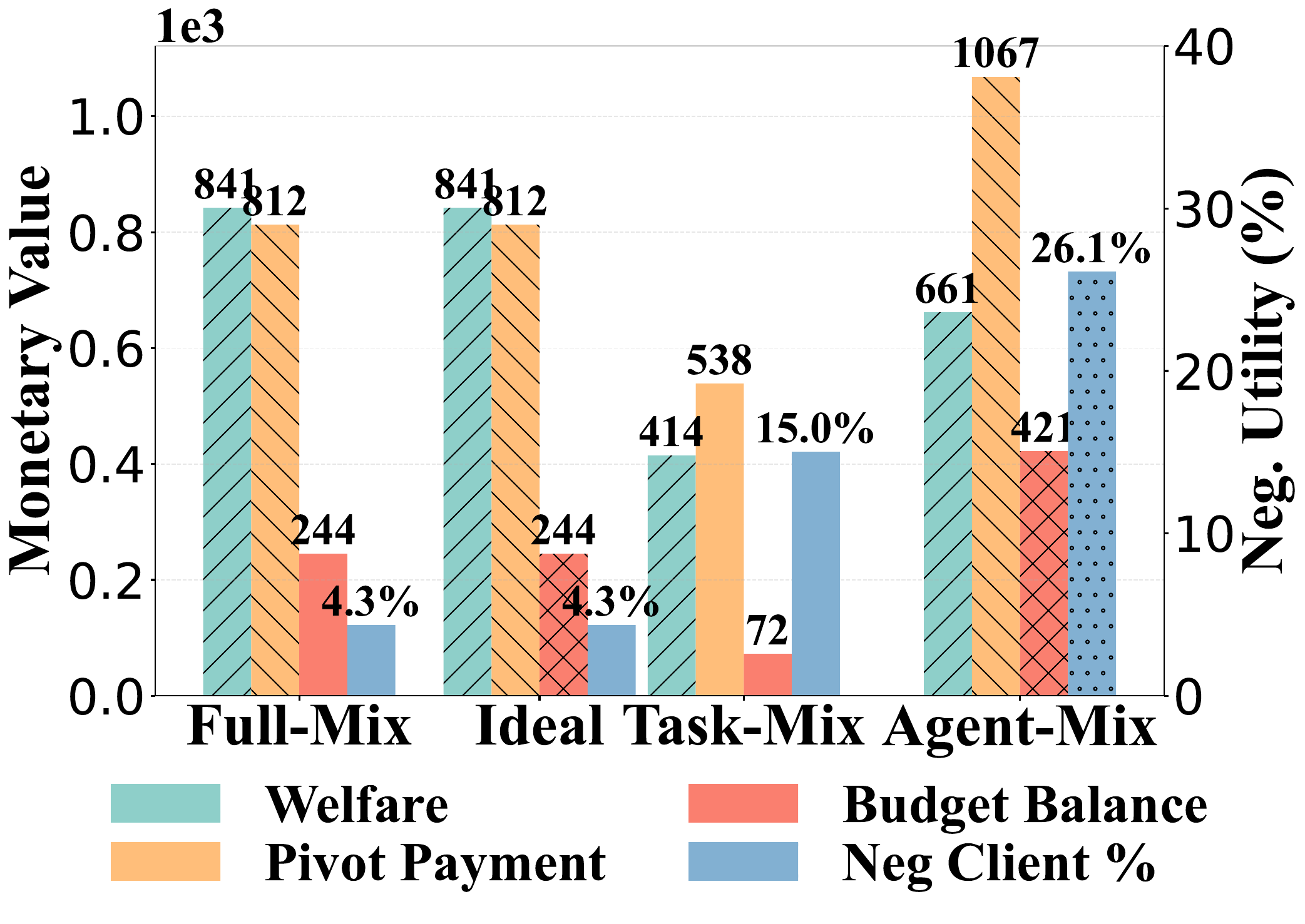} 
    \caption{The Economics Performance under Different Cluster Schemes.}
    \label{fig:cluster}
\end{figure}

\emph{Full-Mix} is a heterogeneous baseline with no prior alignment between tasks and agents. \emph{Ideal} represents homogeneous alignment, where tasks and agents are pre-clustered to minimize matching entropy. \emph{Task-Mix} clusters agents by specialization while tasks remain heterogeneous, whereas \emph{Agent-Mix} clusters tasks while agents remain heterogeneous, assessing how each side absorbs structural mismatch.

Figure~\ref{fig:cluster} shows that reasonable cluster can cause ideal situation, which can maintain high social benefits and achieve similar results as Full-Mix. However one-sided centric clustering method such as Task-Mix or Agent-Mix can cause task congestion and cutthroat competition as high-level agent service resource are relatively limited, which can reduce social welfare, increase negative client and influence IR.

\section{System Integration and Implement}
\label{sec:implement}

\begin{figure}[htbp]
    \centering
    \includegraphics[width=1\textwidth]{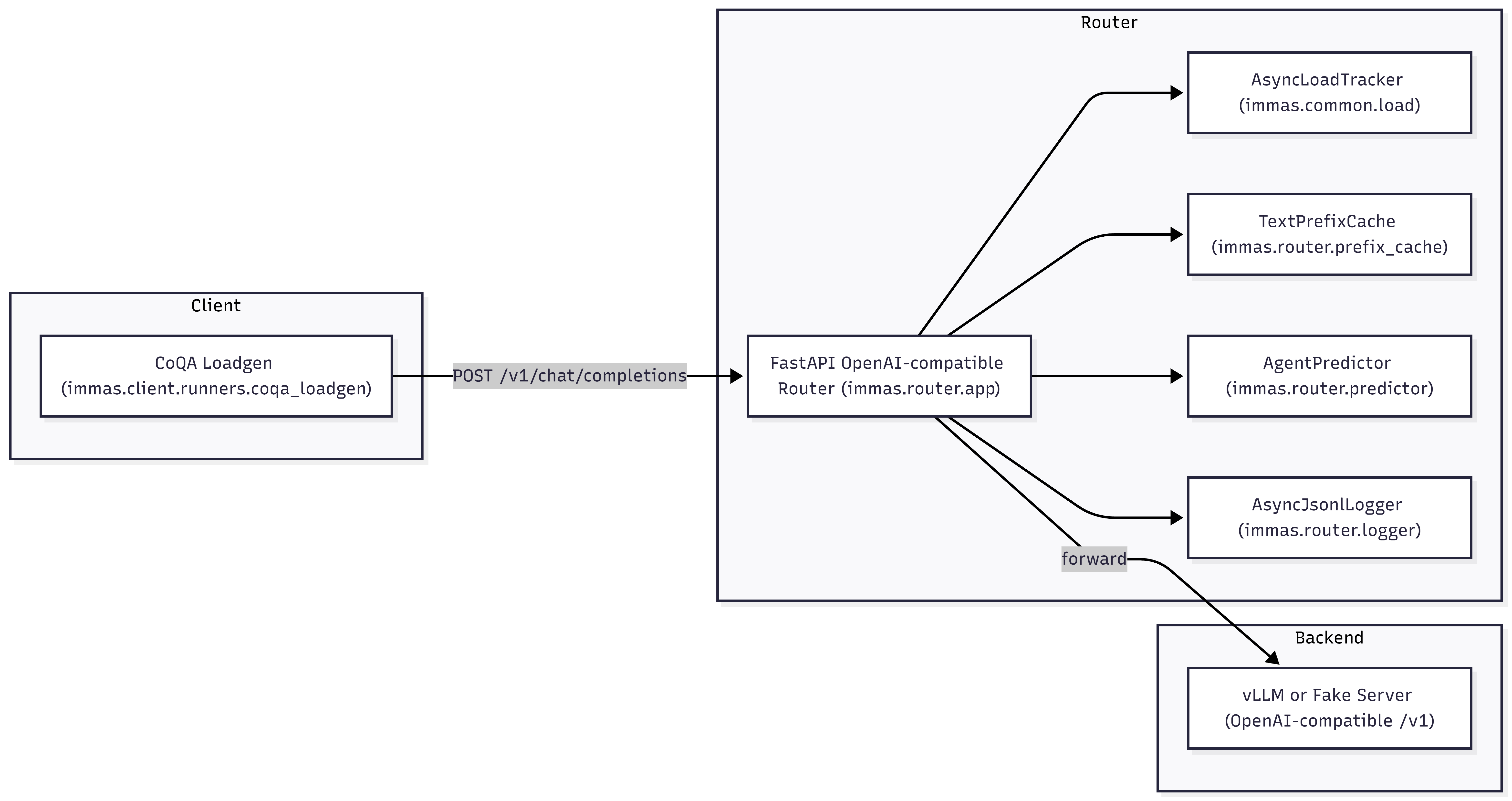} 
    \caption{The System Structure of IEMAS}
    \label{fig:IMMASsystem}
\end{figure}

This appendix details the software architecture and implementation logic of IEMAS. The system is designed as an asynchronous, event-driven proxy service that acts as an intelligent router between a high-throughput client load generator and a heterogeneous pool of Large Language Model (LLM) backends. The implementation relies primarily on Python's \texttt{asyncio} for concurrency and \texttt{FastAPI} for the web service layer.

The architecture is divided into three primary functional blocks: the \textbf{Client} (load generation), the \textbf{Router} (state management, prediction, and allocation), and the \textbf{Backend Interface} (protocol translation and telemetry).

\subsection{Client Implementation}
\label{app:client_impl}

The client module (\texttt{iemas.client}) implements a pure load generator designed to stress-test the routing infrastructure using the CoQA dataset.

\textbf{State Management and Concurrency.}
The client executes multiple dialogues concurrently using \texttt{asyncio.gather}. Within each dialogue task (\texttt{run\_dialogue}), turns are executed sequentially to preserve conversational causality---Turn $N$ must complete before Turn $N+1$ begins. The client maintains a local conversation history list, appending user questions and assistant answers verbatim as they occur. Concurrency is controlled via a global \texttt{asyncio.Semaphore} (\texttt{max\_concurrency}), ensuring a fixed upper bound on the number of simultaneous active requests.

\textbf{Protocol and Tracing.}
Unlike standard benchmarks that might maintain a persistent connection, the client initiates a fresh HTTP POST request for every turn to the router's \texttt{/v1/chat/completions} endpoint. To enable the router to track sessions across these stateless HTTP requests, the client injects custom tracing headers:
\begin{itemize}
    \item \texttt{X-IEMAS-RUN-ID}: A unique identifier for the experiment run.
    \item \texttt{X-IEMAS-DIALOGUE-ID}: The unique ID of the current conversation.
    \item \texttt{X-IEMAS-TURN-NUMBER}: The monotonic turn index (1-based).
    \item \texttt{X-IEMAS-SOURCE}: The provenance of the dialogue text.
\end{itemize}

\subsection{Router Architecture}
\label{app:router_arch}

The router (\texttt{iemas.router}) acts as the central decision-making entity. It does not perform inference itself but routes requests to backend endpoints defined in a YAML configuration file.

\subsubsection{Asynchronous Micro-Batching}
To enable collective decision-making (such as auctions) rather than greedy per-request routing, the router implements a \texttt{MicroBatcher} component. Incoming requests are not processed immediately; they are wrapped in a \texttt{PendingChatCompletion} object containing the request body and an unresolved \texttt{asyncio.Future}.

These pending objects are submitted to the \texttt{MicroBatcher}, which buffers them into an \texttt{asyncio.Queue}. The batcher yields a batch for processing when one of two conditions is met:
\begin{enumerate}
    \item \textbf{Size Threshold:} The queue size reaches \texttt{max\_batch\_size} (e.g., 16 requests).
    \item \textbf{Time Threshold:} The oldest request in the queue has waited for \texttt{max\_wait\_ms} (e.g., 10ms).
\end{enumerate}
This mechanism ensures that the overhead of batching remains bounded by a tight latency budget. The batcher runs as a background task (\texttt{\_run\_loop}) and invokes a callback handler (\texttt{handle\_chat\_batch}) for every emitted batch.

\subsubsection{Prefix Caching and State Tracking}
Effective routing requires estimating the state of the Key-Value (KV) cache at each backend without direct access to backend memory. The router maintains a local \texttt{TextPrefixCache}.

\textbf{Data Structure.} The cache is a dictionary mapping the tuple \texttt{(backend\_id, model, dialogue\_id)} to the full text of the prompt used in the \textit{previous} turn.

\textbf{Feature Extraction.} For every request in a micro-batch, the router computes the Longest Common Prefix (LCP) between the current request's serialized prompt and the cached text for each candidate backend. This computation yields a \texttt{ratio} (LCP length / prompt length) and \texttt{lcp\_chars}, which serve as proxy features for the backend's actual KV cache hit rate.

\textbf{Eviction Heuristic.} The router implements a heuristic eviction policy (\texttt{should\_evict\_router\_prefix\_cache}). If a backend reports zero cached tokens in its usage statistics despite a high router-side prefix match, the router infers that the backend has evicted the KV cache and invalidates its own local record to resync state.

\subsubsection{Online Prediction Subsystem}
The system employs an \texttt{AsyncBackendPredictorPool} to manage online learning models. To prevent interference between the learning processes of different backends, the system maintains a separate, independent \texttt{AgentPredictor} instance for each backend.

\textbf{Feature Engineering.} The \texttt{PredictorInput} dataclass aggregates features at routing time. These include:
\begin{itemize}
    \item \textbf{Request Features:} Prompt length (characters), turn number.
    \item \textbf{Cache Features:} The computed \texttt{kvmatch\_text} ratio.
    \item \textbf{System Load:} Router-wide inflight requests and global Requests Per Second (RPS).
    \item \textbf{Local Load:} Backend-specific metrics including \texttt{backend\_inflight}, \texttt{backend\_rps}, and utilization (inflight divided by capacity).
\end{itemize}

\textbf{Model Updates.} The predictors utilize Hoeffding Trees (\texttt{HoeffdingTreeRegressor}, \texttt{HoeffdingTreeClassifier}) from the \texttt{river} library. They support \texttt{partial\_fit} via the \texttt{learn\_one} method, allowing the system to update the model incrementally immediately after a request completes.

\subsubsection{Auction-Based Routing Logic}
When the routing policy is set to \texttt{auction}, the router executes the \texttt{select\_backends\_auction} function.

\textbf{Graph Construction.} The allocation problem is modeled as a Min-Cost Max-Flow (MCMF) network flow graph:
\begin{itemize}
    \item A \textbf{Source} node connects to all Request nodes with capacity 1 and cost 0.
    \item \textbf{Request} nodes connect to \textbf{Backend} nodes. An edge exists only if the calculated welfare for that assignment is positive. The capacity is 1, and the cost is set to negative welfare (since standard solvers minimize cost).
    \item \textbf{Backend} nodes connect to a \textbf{Sink} node with capacity equal to the backend's available concurrency slots.
\end{itemize}

\textbf{Solver.} The router includes a custom, dependency-free implementation of the Successive Shortest Path algorithm (\texttt{iemas.router.auction.mcmf}). It uses Bellman-Ford potentials to handle negative edge costs and Dijkstra's algorithm for finding augmenting paths.

\textbf{VCG Payment Calculation.} To ensure economic robustness, the system calculates Vickrey-Clarke-Groves (VCG) payments. This requires calculating the "externality" a request imposes on others. For a batch of size $N$, the router runs the MCMF solver $N+1$ times: once to find the optimal allocation, and once for each request $i$ (removing $i$ from the graph) to calculate the counterfactual welfare.

\subsubsection{Performance Evaluation}
To provide a ground-truth signal for the performance predictor, the router evaluates response correctness asynchronously using the \texttt{PerformanceEvaluator} protocol. Two implementations are provided:
\begin{itemize}
    \item \textbf{RougeCoqaEvaluator}: Computes ROUGE-1/2/L F1 scores against the dataset gold answer.
    \item \textbf{TokenSpanCoqaEvaluator}: A deterministic evaluator that normalizes the text (lowercasing, number normalization) and checks if the gold answer tokens appear as a contiguous subsequence in the model output.
\end{itemize}

\subsection{Backend Interface and Deployment}
\label{app:backend_implementation}

The IEMAS router abstracts the underlying inference engines through a unified protocol layer defined in \texttt{iemas.router.components.backend}. This layer manages connection pooling, protocol translation, and high-fidelity telemetry injection. While the system is agnostic to the specific serving framework, our reference implementation deploys high-throughput \texttt{vLLM} instances.

\subsubsection{The \texttt{HttpOpenAIBackend} Abstraction}
The core component interacting with model servers is the \texttt{HttpOpenAIBackend} class. This class wraps an asynchronous \texttt{httpx.AsyncClient} to manage persistent HTTP/1.1 connections to the backend's \texttt{/v1/chat/completions} endpoint.

\textbf{Authentication and Configuration.}
Each backend instance is initialized with a \texttt{base\_url\_v1} and an optional \texttt{api\_key}. The backend automatically injects the \texttt{Authorization: Bearer <key>} header into every outbound request. The configuration also supports distinct pricing models for input, cached-input, and output tokens, which are normalized into a unified cost proxy during initialization.

\subsubsection{Telemetry Injection via Forced Streaming}
A critical requirement for the router is to measure Time-To-First-Token (TTFT) latency to train its prediction models, even when the client requests a standard non-streaming JSON response. To achieve this without modifying the standard OpenAI protocol, we implement a side-channel measurement technique within \texttt{forward\_chat\_completions}.

\textbf{Protocol Translation Logic:}
When the router processes a request, it modifies the payload before forwarding it to the backend:
\begin{enumerate}
    \item \textbf{Forced Streaming:} The router explicitly sets \texttt{stream=True} in the request body, regardless of the client's original preference.
    \item \textbf{Stream Options:} It injects \texttt{stream\_options=\{"include\_usage": True\}} to ensure that the backend sends token usage statistics (prompt processing and completion tokens) at the end of the stream.
\end{enumerate}

\textbf{Stream Consumption and Measurement:}
The backend consumes the resulting Server-Sent Events (SSE) stream using an asynchronous iterator.
\begin{itemize}
    \item \textbf{TTFT Measurement:} The router records the monotonic timestamp ($T_{start}$) immediately before sending the request. It records a second timestamp ($T_{first}$) upon receiving the first valid SSE data chunk containing a content delta.
    \item \textbf{Response Reconstruction:} Because the client may expect a single JSON object, the router employs a \texttt{\_StreamReconstruction} class. This component aggregates the incoming \texttt{choices[].delta.content} fragments into a buffered \texttt{\_ChoiceAccum} object.
    \item \textbf{Final Assembly:} Once the stream terminates (receiving \texttt{[DONE]}), the reconstructor builds a standard \texttt{ChatCompletion} JSON object, populating the \texttt{usage} field from the final stream chunk.
\end{itemize}

\textbf{Internal Telemetry Transport:}
To pass the measured TTFT up to the router's logging and prediction layers without breaking the API contract, the system injects the measurement into the reconstructed JSON payload using a private key, \texttt{\_iemas\_t\_first\_token\_monotonic}. This key is extracted and removed by the routing pipeline before the final response is returned to the client.

\subsubsection{Reference Deployment Configuration}
The system is validated against \texttt{vLLM} backends. The deployment scripts utilize specific flags to enable the advanced memory features relied upon by the router's prefix cache and cost models.

\textbf{vLLM Configuration Flags:}
As seen in the deployment scripts (e.g., `llama.sh`, `qwen.sh`), the backends are launched with the following critical arguments:
\begin{itemize}
    \item \texttt{--enable-prefix-caching}: This enables the backend's internal BlockManager to reuse KV-cache pages across requests sharing common prefixes. This is the physical mechanism that the router's \texttt{TextPrefixCache} attempts to model.
    \item \texttt{--enable-prompt-tokens-details}: This instructs vLLM to return detailed breakdown of token usage, specifically the \texttt{cached\_tokens} field. The router uses this ground-truth signal to validate its prefix cache predictions and detect eviction events.
    \item \texttt{--max-model-len 4096}: Ensures a consistent context window across heterogeneous models to prevent out-of-memory errors during high-concurrency micro-batching.
\end{itemize}

\newpage

\section{Pseudo-code of IEMAS}

\begin{algorithm}[h]
   \caption{IEMAS: Incentive-Efficiency Routing Framework}
   \label{alg:iemas_overview}
\begin{algorithmic}[1]
   \STATE {\bfseries Input:} Batch of client tasks $\mathcal{C}$, Set of agents $\mathcal{S}$ with capacities $\{B_i\}$
   \STATE {\bfseries Global State:} Predictive Model $\mathcal{M}$, Agent Prefix Ledgers $\mathcal{L} = \{\bar{p}_{i,d}\}$
   
   \STATE \textit{// Phase 1: Cache-Aware Prediction \& Valuation}
   \FOR{each task $j \in \mathcal{C}$ \AND each agent $i \in \mathcal{S}$}
       \STATE Retrieve last cached prompt $\bar{p}_{i,d(j)}$ from $\mathcal{L}$
       \STATE Compute Cache Affinity $\omega_{ij}$ via Eq.~\ref{eq:cache_affinity}
       \STATE Extract feature vector $\mathbf{x}_{ij}$ (load, cache, metadata)
       \STATE Predict QoS: $(\hat{L}_{ij}, \hat{P}_{ij}, \hat{C}_{ij}) \leftarrow \mathcal{M}.\text{predict}(\mathbf{x}_{ij})$
       \STATE Calculate Client Valuation $v_j$ via Eq.~\ref{eq:individual_value}
       \STATE Compute Net Welfare weight $w_{ij} \leftarrow v_j - \hat{C}_{ij}$
       \IF{$w_{ij} < 0$}
           \STATE Prune edge $(i,j)$ \COMMENT{Exclude inefficient matches}
       \ENDIF
   \ENDFOR

   \STATE \textit{// Phase 2: Welfare Maximization (MCMF)}
   \STATE Construct bipartite graph $G$ with edge costs $-w_{ij}$
   \STATE Compute allocation $x^* \leftarrow \text{MinCostMaxFlow}(G)$ \COMMENT{Theorem 4.1}
   \STATE Calculate Total Welfare $W(\mathcal{C})$

   \STATE \textit{// Phase 3: VCG Payment \& Dispatch}
   \FOR{each task $j$ matched to agent $i$ in $x^*$}
       \STATE \textit{// Calculate Opportunity Cost (Externality)}
       \STATE Re-solve MCMF for $G \setminus \{j\}$ to get $W(\mathcal{C} \setminus \{j\})$
       \STATE Compute Payment $p_j$ via Eq.~\ref{eq:vcg}
       \STATE \textbf{Dispatch} task $j$ to agent $i$
   \ENDFOR

   \STATE \textit{// Phase 4: Execution \& Online Learning}
   \FOR{each completed task $j$ by agent $i$}
       \STATE Observe realized metrics $L^{\text{obs}}_{ij}, C^{\text{obs}}_{ij}, P^{\text{obs}}_{ij}$
       \STATE $\mathcal{M}.\text{update}(\mathbf{x}_{ij}, \text{labels})$ \COMMENT{Hoeffding Tree Regressor Update}
       \STATE $\mathcal{L}.\text{update}(i, d(j), \text{new\_prompt})$ \COMMENT{Update Prefix Ledger}
   \ENDFOR
\end{algorithmic}
\end{algorithm}

\end{document}